\def\BibTeX{{\rm B\kern-.05em{\sc i\kern-.025em b}\kern-.08em
    T\kern-.1667em\lower.7ex\hbox{E}\kern-.125emX}}
\newtheorem{theorem}{Theorem}
\DeclareMathAlphabet\mathcal{OMS}{cmsy}{m}{n}
\SetMathAlphabet\mathcal{bold}{OMS}{cmsy}{b}{n}
\DeclarePairedDelimiter\ceil{\lceil}{\rceil}
\definecolor{mod}{rgb}{0,0,0} % for modification
\newcommand\figSize{0.475}
\newcommand{\orcidicon}{\includegraphics[width=0.32cm]{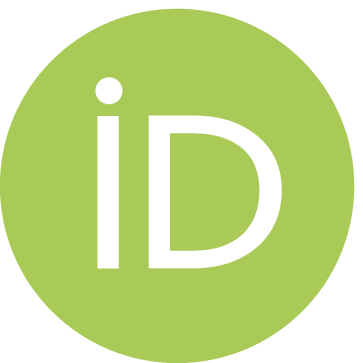}}
\xdef\csname orcid\x\endcsname{\noexpand\href{https://orcid.org/\csname orcidauthor\x\endcsname}{\noexpand\orcidicon}}
\begin{document}

\title{Learning from Images: Proactive Caching with Parallel Convolutional Neural Networks}

\author{Yantong~Wang,~Ye~Hu,~Zhaohui~Yang,~Walid~Saad,~Kai-Kit~Wong,~and~Vasilis~Friderikos
	\thanks{
% 	\IEEEcompsocthanksitem Y. Wang was with the Center of Telecommunication Research, Department of Engineering, King's College London, London, WC2R 2LS, UK. He is now with the School of Information Science and Engineering, Shandong Normal University, Ji'nan, 250358, China. \protect E-mail: yantong@sdnu.edu.cn.
	\IEEEcompsocthanksitem Y. Wang and V. Friderikos are with the Center of Telecommunication Research, Department of Engineering, King's College London, London, WC2R 2LS, UK. \protect E-mail: \{yantong.wang,vasilis.friderikos\}@kcl.ac.uk.
	\IEEEcompsocthanksitem Y. Hu and W. Saad are with the Wireless@VT, Bradley Department of Electrical and Computer Engineering, Virginia Tech, Balcksburg, VA 24061 USA. \protect E-mail: \{yeh17,walids\}@vt.edu.
	\IEEEcompsocthanksitem Z. Yang and K. Wong are with the Department of Electronic and Electrical Engineering, University College London, London, WC1E 6BT, UK. \protect E-mail: \{zhaohui.yang,kai-kit.wong\}@ucl.ac.uk.
	}\vspace{-5em}
} 

\maketitle
\IEEEpeerreviewmaketitle

\begin{abstract}
With the continuous trend of data explosion, delivering packets from data servers to end users causes increased stress on both the fronthaul and backhaul traffic of mobile networks. To mitigate this problem, caching popular content closer to the end-users has emerged as an effective method for reducing network congestion and improving user experience. To find the optimal locations for content caching, many conventional approaches construct various mixed integer linear programming (MILP) models. However, such methods may fail to support online decision making due to the inherent curse of dimensionality. In this paper, a novel framework for proactive caching is proposed. This framework merges model-based optimization with data-driven techniques by transforming an optimization problem into a grayscale image. For parallel training and simple design purposes, the proposed MILP model is first decomposed into a number of sub-problems and, then, convolutional neural networks (CNNs) are trained to predict content caching locations of these sub-problems. Furthermore, since the MILP model decomposition neglects the internal effects among sub-problems, the CNNs' outputs have the risk to be infeasible solutions. Therefore, two algorithms are provided: the first uses predictions from CNNs as an extra constraint to reduce the number of decision variables; the second employs CNNs' outputs to accelerate local search. Numerical results show that the proposed scheme can reduce $71.6\%$ computation time with only $0.8\%$ additional performance cost compared to the MILP solution, which provides high quality decision making in real-time. 
\end{abstract}
\begin{IEEEkeywords}
Convolutional Neural Networks, Grayscale Image, Mixed Integer Linear Programming, Proactive Caching
\end{IEEEkeywords}

\section{Introduction}
\label{sec:introduction}
Caching popular contents at the edge of a wireless network has emerged as an effective technique to alleviate congestion and data traffic over the backhaul and fronthaul of existing wireless systems~\cite{SurveyCaching}. Caching methods can be classified into three categories~\cite{vasilakos2012proactive}: reactive caching~\cite{sourlas2010mobility}, durable subscription~\cite{farooq2004performance}, and proactive caching~\cite{gaddah2010extending}. In reactive caching, the connected access point keeps caching items that match user's subscription after the user disconnects from the network. When the user reconnects from a different point later in time, the caching items are retrieved from the previous access point. Reactive caching has a re-transmission delay from the old access point to the new one. In durable subscription, both the current connected access point and all proxies within a domain that a user can possibly connect to should keep caching the content. Therefore, durable subscription reduces the transmission delay but significantly increases memory usage. Compared with those two policies, proactive caching selects a subset of proxies as potential caching hosts thus giving rise to a balance between storage cost and transmission latency. However, enabling proactive caching in real networks faces four key challenges~\cite{wang2020survey}: where to cache, what to cache, cache dimensioning, and content delivery. In this paper, the primary concentration is on ``where to cache'' problem, i.e., determining caching locations.

Proactive caching techniques have attracted significant attention in prior art~\cite{yang2018cache,yang2020joint,wang2019proactive,zheng2016optimal,fang2015energy,zou2019joint}. Conventionally, the ``where to cache'' problem is modeled as an optimization problem that is then solved using convex optimization~\cite{yang2018cache,yang2020joint}, mixed integer linear programming (MILP)~\cite{wang2019proactive,zheng2016optimal}, and game theory~\cite{fang2015energy,zou2019joint}. However, these solutions, particularly MILP, can be $NP$-hard. Due to the curse of dimensionality, such optimal model-based methods are not suitable to support real-time decision making on selecting caching locations.

Recently, deep learning (DL) has emerged as an important tool for solving caching problems as discussed in~\cite{lei2019learning,lei2017deep,chen2017caching,tsai2018caching,ale2019online,fan2021pa,qian2020reinforcement,he2017integrated,li2019deep,zhong2020deep}. In~\cite{lei2019learning}, the authors propose the use of convolutional neural network (CNN) to perform time slot allocation for content delivery at wireless base stations. The study in~\cite{lei2017deep} employs a fully-connected neural network (FNN) to simplify the searching space of caching optimization model and, then, determine caching base stations. In~\cite{chen2017caching}, the authors introduced a conceptor-based echo state network (ESN) to learn the users' split patterns independently, which results in a more accurate prediction. The authors in~\cite{tsai2018caching} propose a long short-term memory (LSTM) model to analyse and extract information from 2016 U.S. election tweets, thus providing support for content caching. The authors in~\cite{ale2019online} propose a stacked DL structure, which consists of a CNN, a bidirectional LSTM and a FNN, for online content popularity prediction and cache placement in the mobile network. In~\cite{fan2021pa}, a gated recurrent unit (GRU) model is proposed to predict time-variant video popularity for cache eviction. Except for the aforementioned supervised learning methods, a number of works used deep reinforcement learning (DRL) for content caching. For instance, the authors in~\cite{qian2020reinforcement} propose a joint caching and push policy for mobile edge computing networks using a deep Q network (DQN). In~\cite{he2017integrated}, the authors employ a double dueling DQN for dynamic orchestration of networking, caching and computing in vehicle networks. Moreover, some actor-critic models are proposed in~\cite{li2019deep} and~\cite{zhong2020deep} to decide content caching in wireless networks.

However, most existing DL works on caching such as~\cite{lei2019learning,lei2017deep,chen2017caching,tsai2018caching,ale2019online} and~\cite{qian2020reinforcement,he2017integrated,li2019deep,zhong2020deep} have considered a two-tier heterogeneous network structure, in which the caching content is available within one or two hops transmission. Therefore, these works may not scale well in flow routing decisions of multi-hop environments. The study in~\cite{fan2021pa} has focused on the caching policy of a single node, which may results in insufficient utilization of the network caching resources. Additionally, although the well-trained DL models can provide competitive solutions compared with conventional heuristics, the training process is very time-consuming and struggle to converge, particularly LSTM in~\cite{tsai2018caching,ale2019online} and DRL in~\cite{qian2020reinforcement,he2017integrated,li2019deep,zhong2020deep}. 

In this paper, the main contribution is to propose a framework merging MILP model with data-driven approach to determine caching locations. The proposed MILP model jointly considers the caching cost and multi-hop transmission cost in mobile network, with constraints of node storage capacity and link bandwidth limitation. Since the MILP model shares some properties with image recognition task (more details are represented in section~\ref{subsubsec:CNN}), we provide a novel approach to transform the MILP model into a grayscale image. To our best knowledge, beyond our previous work in~\cite{wang2020caching}, no work has used this method for image transformation. Recently, CNN has been widely used in computer vision tasks and the study in~\cite{sze2017efficient} shows that a CNN can exceed human-level accuracy in image recognition. Therefore, we consider CNN as the data-driven approach to extract spatial features from the input images. In order to accelerate CNN training, we decompose the MILP model into a number of independent sub-problems, and, then train CNNs in parallel with optimal solutions. As aforementioned, the caching assignment problem is $NP$-hard, and, thus, calculating optimal solutions can be time-consuming, especially when dealing with large search space instances. Note that for $NP$-hard problems, many small to medium search space instances can be solved efficiently~\cite{nowak2018revised}. Therefore, we solve the small instances of caching assignment problem to train CNNs, and, then use trained CNNs to predict caching locations for medium and large instances. To further improve the performance, two algorithms are provided: a reduced MILP model using CNNs' outputs as an extra constraint; and a hill-climbing local search algorithm using CNNs' output as searching directions. Unlike the work in~\cite{lei2019learning} and~\cite{lei2017deep}, which require the number of requests should exactly match the input layer of trained CNN, we propose an augmenting allocations algorithm to deal with the requests excessive case. We show via numerical results that the proposed framework can reduce $71.6\%$ computation time with only $0.8\%$ additional performance cost compared with the optimal solution. Furthermore, the impact of hyperparameters on CNN's performance is also presented. To this end, the key contributions of this paper can be summarized as follows,
\begin{itemize}
    \item We propose a framework which merges MILP model with CNNs to solve ``where to cache'' problem. We also provide a novel method that transforms the MILP model into a grayscale image to train the CNNs.
    \item For simple design and parallel training purposes, we decompose 
    the MILP model and train CNNs correspondingly to predict content caching locations. 
    \item To further improve the performance, we propose two algorithms which perform global and local search based on CNNs' prediction. 
    \item To deal with requests excessive case, we give an augmenting allocations algorithm to divide the input image into partitions that match CNN's input size.
\end{itemize}

The rest of paper is organized as follows. Section \ref{sec:model} presents the system model for proactive caching. In Section \ref{sec:DNN}, we discuss the transformation of the MILP model into a grayscale image as well as the training and testing process of the proposed CNN framework. Section \ref{sec:investigations} then provides a performance evaluation and Section \ref{sec:conclusions} concludes the paper and outlines future works.
\section{System Model and Problem Formulation}
\label{sec:model}

We model a mobile network as an undirected graph $\mathcal{G}=\{\mathcal{V},\mathcal{L}\}$ having a set of vertices $\mathcal{V}$ and a set of links $\mathcal{L}$, as shown in Figure \ref{fig:system}. We assume that vertices consist of access points or access routers\footnote{Here the term \textit{router} is not the core router in the Internet backbone but the Information-Centric Networking (ICN) router~\cite{xylomenos2013survey}.} (ARs) set $\mathcal{A}$ and general routers set $\mathcal{R}$, where the former provides network connectivity and the latter supplies traffic flow forwarding. Meanwhile, some of the access and typical routers have caching capacity to host the users' contents of interest. We call these routers content routers or edge clouds (ECs)\footnote{The term \textit{edge cloud} and \textit{content router} will be used interchangeably.}. We define $\mathcal{E}\subset\mathcal{A}\cup\mathcal{R}$ as the set of ECs. In this model, the end-users are mobile and, thus, their serving ARs will change over time. Here we assume that the mobility pattern of each user is obtainable, since the user moving probability is predicable by exploring historical data~\cite{chen2017caching}.

\begin{figure}[t]
	\centering
	% trim order: left bottom right top
	\includegraphics[trim=0mm 0mm 0mm 0mm, clip, width=.8\textwidth]{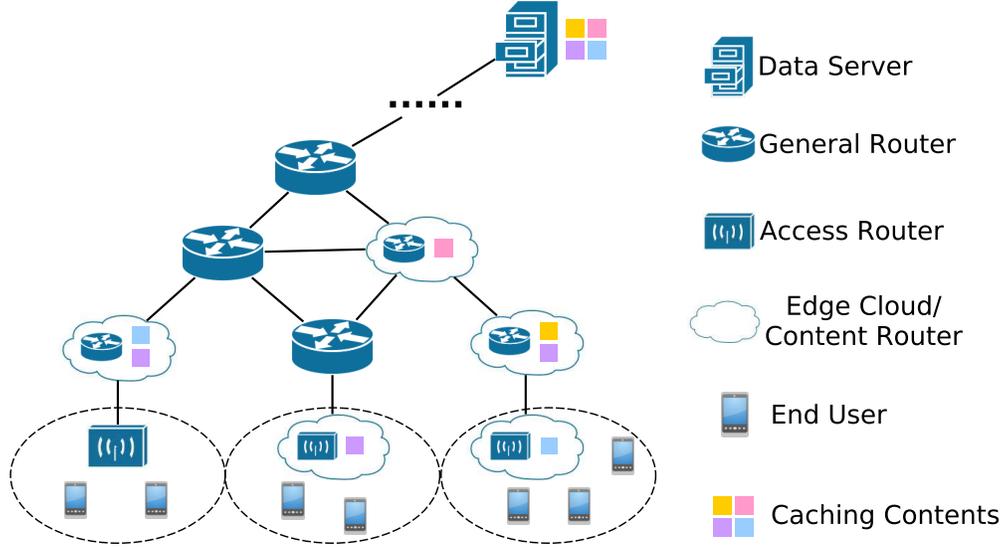}
	\caption{Illustration of our caching system model.}
	\label{fig:system}
\end{figure} 

We then define $w_e$ as the caching storage capacity of each content router $e\in\mathcal{E}$ and $c_l$ as the bandwidth of each communication link $l\in\mathcal{L}$. Here, without loss of generality, the salient assumption is that each user requests only one flow\footnote{The symbol $k$ represents both the end-user and the associated request flow.} $k$ and the set of traffic flows transmitted through the network is given by $\mathcal{K}$. 
Let $s_k$ be the required data size of user $k$; $b_k$ be the required transmission rate of user $k$; and $p_{ka}$ be the probability of user $k$ connecting to AR $a\in\mathcal{A}$. 
In order to measure the transmission cost, we define $N_{ae}$ as the number of hops from AR $a$ to a content hosting router $e$ where $N_{ae}$ is obtained via the shortest path (unambiguously $N_{ae}=0$ holds when $a=e$). $N^T$ is the number of hops from an AR to the central data server. According to \cite{van2014performance}, from a given source to a given destination, there are 10 to 15 hops on average for each user's request. For simplicity, $N^T$ is considered to be a constant equal to the mean number of hops from the AR to the data server. Our notation is summarized in Table \ref{tab:Notations}.

\begin{table}[t] 
	\centering 
	\caption{Summary of main notations.}
	\begin{tabular}{c|l} 
		\hline
		\hline
		$\mathcal{K}$ & set of flows/requests from users\\
		$\mathcal{L}$ & set of transmission links \\
		$\mathcal{A}$ & set of ARs \\
		$\mathcal{E}$ & set of ECs \\
		$|\mathcal{X}|$ & cardinality of set $\mathcal{X}$ \\
		\hline
		$\alpha$ & impact factor of caching cost\\
		$\beta$ & impact factor of transmission cost\\
		$N_{ae}$ & number of hops from AR $a$ to EC $e$\\
		$B_{lae}$ & relation between link $l$ and shortest path from $a$ to $e$\\
		$N^T$ & number of hops from AR to data center\\
		\hline
		$p_{ka}$ & the probability of flow $k$ connecting with AR $a$\\
		$s_k$ & flow $k$ required caching size\\
		$b_k$ & flow $k$ required bandwidth\\
		$w_e$ & available space on EC $e$\\
		$c_l$ & available link capacity on link $l$\\
% 		\hline
% 		$x_{ke}$ & decision variable indicating cache content of flow $k$ at EC $e$ \\
% 		$y_{kl}$ & decision variable describing the flow $k$ traverses via link $l$ \\
% 		$z_{kae}$ & decision variable expressing the flow $k$ traverses via the shortest path between AR $a$ and EC $e$ \\
		\hline
		\hline
	\end{tabular}
	\label{tab:Notations}
\end{table}

\subsection{Problem Formulation}

We consider two types of cost: a content caching cost $C^C$ and a content transmission cost $C^T$.
From~\cite{vasilakos2012proactive}, we have:
\begin{equation}
\label{fml:cc}
C^C(x_{ke})=\sum_{k\in\mathcal{K}}\sum_{e\in\mathcal{E}}\frac{x_{ke}}{1-\sum_{k\in\mathcal{K}}s_k\!\cdot\! x_{ke}/w_e},
\end{equation}
where $x_{ke}$ is a binary decision variable with $x_{ke}=1$ indicating that the content for flow $k$ is cached at EC $e$, otherwise, we have $x_{ke}=0$.
Therefore, $\sum_{k\in\mathcal{K}}s_k\!\cdot\! x_{ke}/w_e$ represents the storage utilization degree of EC $e$. From formula \eqref{fml:cc}, higher storage utilization leads to a higher cost for content caching and, thus, $C^C$ can be used for load balancing.

Next, we define the transmission cost $C^T$ that depends on the number of traversal hops $N_{ae}$ and $N^T$:
\begin{equation}
\begin{aligned}
\label{eq:trans}
C^T(z_{kae})=\sum_{k\in\mathcal{K}}\sum_{a\in\mathcal{A}}\sum_{e\in\mathcal{E}} p_{ka}z_{kae}N_{ae}+\sum_{k\in\mathcal{K}}(1\!-\!\sum_{a\in\mathcal{A}}\sum_{e\in\mathcal{E}}p_{ka}z_{kae})N^T,
\end{aligned}
\end{equation}
where $z_{kae}$ is a binary decision variable with $z_{kae}=1$ indicating that flow $k$ connects with AR $a$ and receives the requested content from EC $e$ through the shortest path, otherwise, $z_{kae}=0$.
Therefore, $p_{ka}z_{kae}$ represents the probability for flow $k$ choosing the shortest path between AR $a$ and EC $e$ and $\sum_{a\in\mathcal{A}}\sum_{e\in\mathcal{E}}p_{ka}z_{kae}$ evaluates the probability of cache-hit happens for flow $k$. Hence, $\sum_{k\in\mathcal{K}}\sum_{a\in\mathcal{A}}\sum_{e\in\mathcal{E}} p_{ka}z_{kae}N_{ae}$ is the expected number of transmission hops in the cache-hit case. Similarly, $(1\!-\!\sum_{a\in\mathcal{A}}\sum_{e\in\mathcal{E}}p_{ka}z_{kae})$ captures the probability for cache miss and $\sum_{k\in\mathcal{K}}(1\!-\!\sum_{a\in\mathcal{A}}\sum_{e\in\mathcal{E}}p_{ka}z_{kae})N^T$ is the expected number of transmission hops accordingly.

Our goal is to minimize the total cost function $J$ given by:
\begin{equation}
J(x_{ke},z_{kae})=\alpha\cdot C^C+\beta\cdot C^T,
\end{equation}
where $\alpha$ and $\beta$ are the weights for caching cost $C^C$ and transmission cost $C^T$ respectively. We can now formally pose the proactive edge caching problem as follows:
\begin{subequations} \label{LP:main}
	\begin{align}
	\label{LP:obj}
	&\mathop{\min}\; J(x_{ke},z_{kae}).\\ 
	\textrm{s.t.}\quad
	\label{LP:con1}
	& \sum_{e\in\mathcal{E}} x_{ke}\!=\!1, \forall k\!\in\!\mathcal{K}, \\
	\label{LP:con2}
	& \sum_{k\in\mathcal{K}} s_k\!\cdot\!x_{ke}\!<\!w_e, \forall e\!\in\!\mathcal{E},\\
	\label{LP:con3}
	& \sum_{k\in\mathcal{K}} b_k\!\cdot\!y_{kl}\!<\!c_l,\forall l\in\mathcal{L}, \\
	\label{LP:con4}
	& \sum_{e\in\mathcal{E}} z_{kae}\!\leq\!1, \forall k\!\in\!\mathcal{K},a\!\in\!\mathcal{A}, \\
	\label{LP:con5}
	& z_{kae}\!\leq\!x_{ke}, \forall k\!\in\!\mathcal{K},a\!\in\!\mathcal{A},e\!\in\!\mathcal{E},\\
	\label{LP:con6}
	& y_{kl}\!\leq\!\sum_{a\in\mathcal{A}}\sum_{e\in\mathcal{E}} B_{lae}\!\cdot\!z_{kae}, \forall k\!\in\!\mathcal{K},l\!\in\!\mathcal{L}, \\
	\label{LP:con7}
	& M\!\cdot\!y_{kl}\!\geq\!\sum_{a\in\mathcal{A}}\sum_{e\in\mathcal{E}} B_{lae}\!\cdot\!z_{kae}, \forall k\!\in\!\mathcal{K},l\!\in\!\mathcal{L},\\
	\label{LP:con8}
	& x_{ke},y_{kl},z_{kae}\!\in\!\{0,1\},\forall k\!\in\!\mathcal{K},l\!\in\!\mathcal{L},a\!\in\!\mathcal{A},e\!\in\!\mathcal{E}.
	\end{align}
\end{subequations}
In problem \eqref{LP:main}, $M$ is a sufficiently large number. 
Constraint \eqref{LP:con1} means that each flow will be served by exactly one EC. \eqref{LP:con2} and \eqref{LP:con3} capture the capacity of each EC storage and individual link bandwidth respectively. Here, $y_{kl}$ is a binary decision variable with $y_{kl}=1$ implying that flow $k$ traverses via link $l$; otherwise $y_{kl}=0$. \eqref{LP:con4} indicates that the redirected path is unique and \eqref{LP:con5} enforces an EC should be selected as a caching host, if any flow retrieves the caching content from this EC. 
Furthermore, the next two constraints \eqref{LP:con6} and \eqref{LP:con7} mean that the link being selected for transmission should be on the path between AR $a$ and EC $e$ as per the decision variable $z_{kae}$ (and vice versa). $B_{lae}$ describes the relationship between link and retrieved path, which is defined based on the network topology: 
$$B_{lae}=
\begin{cases}
1, &\text{link $l$ belongs to the shortest path between AR $a$ and EC $e$,} \\
0, &\text{otherwise.}
\end{cases}$$  

\subsection{Model Linearization}
The denominator in \eqref{fml:cc} contains the decision variable $x_{ke}$, and, hence, it results in a non-linear part in the objective function of \eqref{LP:main}. To linearize it, we define a new variable:
\begin{equation}
\label{def:t_e}
t_e=\frac{1}{1-\sum_{k\in\mathcal{K}}s_k\!\cdot\!x_{ke}/w_e}, \forall e\in\mathcal{E}.
\end{equation}
This definition of $t_e$ can be converted to the following constraints:
\begin{subequations}
\begin{align}
\label{con:t_e1}
&t_e-\sum_{k\in\mathcal{K}}\frac{s_k}{w_e}\!\cdot\!x_{ke}t_e=1, \forall e\in\mathcal{E},\\
\label{con:t_e2}
&t_e>0, \forall e\in\mathcal{E}.
\end{align}
\end{subequations}
Then, the caching cost $C^C$ becomes
\begin{equation}
\label{fml:cc2}
    C^C(x_{ke},t_e)=\sum_{k\in\mathcal{K}}\sum_{e\in\mathcal{E}}x_{ke}t_e.
\end{equation}
Both \eqref{con:t_e1} and \eqref{fml:cc2} contain a product of two decision variables (i.e. $x_{ke}t_e$), so we introduce an auxiliary decision variable $\chi_{ke}$, as follows:
\begin{equation}
    \chi_{ke}=x_{ke}t_e=
    \begin{cases}
    t_e,&\text{$x_{ke}$ is 1,}\\
    0,&\text{otherwise.}
    \end{cases}
\end{equation}
Clearly, $\chi_{ke}$ is the product of a continuous variable ($t_e$) and a binary variable ($x_{ke}$), and, thus, it must satisfy the following constraints:
\begin{subequations}
\begin{align}
	\label{LP:con9}
	& \chi_{ke}\!\leq\!t_e,\forall k\!\in\!\mathcal{K},e\!\in\!\mathcal{E},\\
	\label{LP:con10}
	& \chi_{ke}\!\leq\!M\!\cdot\!x_{ke},\forall k\!\in\!\mathcal{K},e\!\in\!\mathcal{E},\\
	\label{LP:con11}
	& \chi_{ke}\!\geq\!M\!\cdot\!(x_{ke}-1)\!+\!t_e,\forall k\!\in\!\mathcal{K},e\!\in\!\mathcal{E}.
\end{align}
\end{subequations}
As defined above, $M$ is a sufficiently large number. Therefore, \eqref{con:t_e1} and \eqref{fml:cc2} can be rewritten in terms of $\chi_{ke}$ as follows:
\begin{equation}
\label{con:t_e}
t_e-\sum_{k\in\mathcal{K}}\frac{s_k}{w_e}\chi_{ke}=1,\forall e\in\mathcal{E},
\end{equation}
\begin{equation}
\label{fml:cc3}
C^C(\chi_{ke})=\sum_{k\in\mathcal{K}}\sum_{e\in\mathcal{E}}\chi_e.
\end{equation}
Finally, we can write the following MILP model:
\begin{subequations} \label{LP:main_MILP}
	\begin{align}
	\label{LP:obj_MILP}
	&\mathop{\min}\; J(\chi_{ke},z_{kae})\\ 
	\textrm{s.t.}\quad
	&\eqref{LP:con1}\sim\eqref{LP:con8},\eqref{LP:con9}\sim\eqref{LP:con11},\eqref{con:t_e},\nonumber \\
	\label{LP:con12}
	& t_e,\chi_{ke}>0,\forall k\!\in\!\mathcal{K}, e\!\in\!\mathcal{E}.
	\end{align}
\end{subequations}
Regarding the time complexity of the aforementioned model, we have the following theorem:
\begin{theorem}
	\label{theo:np_hard}
	The MILP model \eqref{LP:main_MILP} falls into the family of $NP$-hard problems.
\end{theorem}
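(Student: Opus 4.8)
The plan is to prove $NP$-hardness by a polynomial-time reduction from the \textsc{Partition} problem, which is well known to be $NP$-complete. An instance of \textsc{Partition} is a multiset of positive integers $\{a_1,\dots,a_n\}$ with $\sum_{i=1}^{n}a_i=2S$, and the question is whether some subset sums to exactly $S$. The structural fact that makes the reduction work is that, among all the constraints of \eqref{LP:main_MILP}, the only ones that genuinely restrict the caching variables $x_{ke}$ are \eqref{LP:con1} and \eqref{LP:con2}: together they ask for an assignment of each flow to exactly one content router such that the total requested size at every router stays below its capacity --- precisely a bin-packing feasibility question.

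First I would pass to a degenerate special case in which the routing side of the problem becomes vacuous. Take a single AR and two content routers $\mathcal{E}=\{1,2\}$ with $w_1=w_2=2S+1$, make all link capacities $c_l$ strictly positive, and set $z_{kae}=0$, $y_{kl}=0$ for all indices; then constraints \eqref{LP:con3}--\eqref{LP:con7} hold automatically for any $x$. Moreover, for any binary $x$ obeying \eqref{LP:con2} one has $0\le\sum_{k}s_kx_{ke}/w_e<1$, so the quantity $t_e$ of \eqref{def:t_e} is well defined and positive and $\chi_{ke}:=x_{ke}t_e$ satisfies \eqref{LP:con9}--\eqref{LP:con11} and \eqref{con:t_e} once $M$ is chosen large enough. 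Consequently, on this family of instances the MILP \eqref{LP:main_MILP} is feasible if and only if there is a binary $x$ with $\sum_{e}x_{ke}=1$ for every $k$ and $\sum_{k}s_kx_{ke}<w_e$ for every $e$.

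Next I would introduce one flow per integer, with required caching size $s_k=2a_k$. Because the left-hand side of \eqref{LP:con2} is then an even integer, $\sum_{k}s_kx_{ke}<2S+1$ is equivalent to $\sum_{k:\,x_{ke}=1}a_k\le S$; combined with the fact that \eqref{LP:con1} partitions the flows between the two routers and that $\sum_k a_k=2S$, this is satisfiable exactly when each router is assigned flows of total weight $S$, i.e.\ exactly when the original \textsc{Partition} instance is a yes-instance. The construction clearly has size polynomial in $n$ and $\log S$, and any procedure that solves the optimization problem \eqref{LP:main_MILP} also decides its feasibility (it returns an optimal solution or certifies infeasibility); hence \textsc{Partition} reduces to \eqref{LP:main_MILP}, which establishes the theorem. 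If strong $NP$-hardness is wanted, the identical argument with $m$ routers of common capacity $C+\tfrac12$ gives a reduction from \textsc{Bin Packing} instead.

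The step requiring the most attention is the verification that the auxiliary variables $z_{kae},y_{kl},t_e,\chi_{ke}$ can always be completed to a feasible point once $x$ is fixed: one has to check that none of the big-$M$ constraints \eqref{LP:con7},\eqref{LP:con10},\eqref{LP:con11} nor the balance equation \eqref{con:t_e} is violated, and that the strict inequality in \eqref{LP:con2} is compatible with the integer-rounding argument --- which is exactly why the sizes are scaled to $2a_k$ and the capacity is set to $2S+1$ rather than to $S$. The remainder is bookkeeping.
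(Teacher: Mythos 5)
Your reduction is correct, and it targets the same source problem as the paper (the set-partition problem, on a tiny network with one AR and two content routers), but it routes the hardness through a different constraint. The paper sends $w_e\to\infty$ so that the storage constraint \eqref{LP:con2} and the nonlinear caching term vanish ($t_e\equiv 1$, $\chi_{ke}=x_{ke}$), and then encodes partition in the link-bandwidth constraint \eqref{LP:con3} by setting $c_{l_1}=c_{l_2}=\tfrac12\sum_k b_k+\nu$; this step leans on the identification $x_{ke}=z_{kae}=y_{kl}$ on the degenerate topology, which is only implicit in the constraints (since \eqref{LP:con4}--\eqref{LP:con7} permit $z=y=0$ even when $x_{ke}=1$, the equal split is forced only after one argues that every flow is actually routed). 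You instead keep finite storage ($w_1=w_2=2S+1$, $s_k=2a_k$), zero out all routing variables so that \eqref{LP:con3}--\eqref{LP:con7} are vacuous, and let \eqref{LP:con1} and \eqref{LP:con2} alone carry the partition; the parity scaling handles the strict inequality cleanly, and your explicit completion of $t_e$ and $\chi_{ke}$ checks the linearization constraints rather than eliminating them. This buys a tighter feasibility-based argument with no variable-identification leap, and, as you note, it extends to strong $NP$-hardness via bin packing with $m$ routers; the paper's route has the mild advantage of removing the nonlinear denominator outright rather than bounding it. One quirk both arguments inherit from the formulation: constraint \eqref{LP:con12} as written demands $\chi_{ke}>0$, which is incompatible with $\chi_{ke}=x_{ke}t_e=0$ whenever $x_{ke}=0$; this is evidently a typo for $\chi_{ke}\geq 0$ and affects neither proof in substance.
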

\begin{proof}
	See Appendix \ref{sec:proof_A}.
\end{proof}
Therefore, with the increment of the number of requests, it is quite time-consuming to solve \eqref{LP:main_MILP}. In order to accelerate the solving process and overcome its $NP$-hard nature, we propose a framework which merges the optimization model with deep learning method in the next section.
\section{Deep Learning for Caching}
\label{sec:DNN}

In this section, we use CNN to capture the spatial features of \eqref{LP:main_MILP}. Solving \eqref{LP:main_MILP} allows us to determine the allocation of each requested content, i.e. $x_{ke}$, which is expected to be the output of the CNN. However, each EC $e$ can serve more than one flow if there is enough memory space for caching, and in this case, the size of the CNN output space grows exponentially according to $|\mathcal{E}|\cdot 2^{|\mathcal{K}|}$. To deal with this challenge, the dependency among ECs and flows is taken into consideration. Constraint \eqref{LP:con1} forces each flow to be served by exactly one EC. If we ignore the caching resource competition among flows, i.e. the storage space in constraint \eqref{LP:con2} and the bandwidth in \eqref{LP:con3}, then, we can decompose the original caching allocation problem into a number of independent sub-problems where every sub-problem focuses on the allocation of one flow, i.e., we use a first-order strategy~\cite{zhang2013review}.
Moreover, in each sub-problem, the decision variable $x_{ke}$ is reduced to $x_{e}$, and constraint \eqref{LP:con1} becomes $\sum_{e\in\mathcal{E}}x_e=1$. In other word, we only need to classify the specific content into exactly one EC among all caching candidates. Therefore, the sub-problem becomes a multi-classification problem. Next, we can define a CNN that corresponds to each sub-problem and that can be trained in parallel. 
In Section \ref{sec:subtraining}, we introduce the training process for each individual CNN and, then, in Section \ref{sec:subtesting}, we present the testing process by combining the prediction of all of the CNNs, and we propose two algorithms to improve performance.  

\subsection{Training Process}
\label{sec:subtraining}
As shown in Figure \ref{fig:training}, the training process includes four steps in general. Step ($i$) converts the original network caching allocation problem into an MILP model as discussed in Section \ref{sec:model}.

\begin{figure}[t]
	\centering
	% trim order: left bottom right top
	\includegraphics[trim=0mm 0mm 0mm 5mm, clip, width=\textwidth]{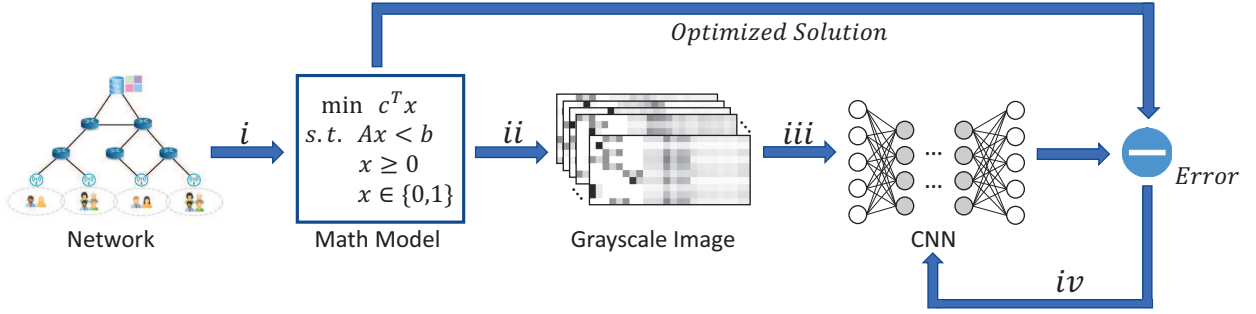}
	\caption{Training process for the proposed CNN.}
	\label{fig:training}
\end{figure}

\subsubsection{Grayscale Image Generation}
In step ($ii$), the mathematical model is transformed into a grayscale image which is used as an input to the CNN. An intuitive idea is constructing a dense parameter matrix which contains the entire information of the optimization model \eqref{LP:main_MILP}. This will require us to include the parameters as listed in Table \ref{tab:Notations} as part of the matrix. Therefore, our CNN should be wide and deep enough to recognize the large input image, which increases the complexity of CNN designing and training. We notice that some of the parameters in Table \ref{tab:Notations} are constant, such as $N_{ae}$ which is the number of hops that is typically fixed for a particular network topology. These parameters have very limited contribution to distinguish different assignments and can be viewed as redundant information during the training procedures.
By analysing the optimal solutions derived from \eqref{LP:main_MILP}, we observe that there is a pattern in user behaviour, network resources, and the allocation of content caching.
For instance, the EC is more likely to be selected as a caching host than others, if a) this EC is closer to mobile users, i.e. less transmission cost $C^T$, b) this EC has more available caching memory space, i.e. less caching cost $C^C$ and valid storage capacity constraint \eqref{LP:con2}, and c) this EC attaches links with enough bandwidth, i.e. valid bandwidth capacity \eqref{LP:con3}.
Furthermore, the transmission cost $C^T$ mentioned in a) depends on the moving probability $p_{ka}$ and the number of hops $N_{ae}$. 
For the storage space in b), we introduce $q_{ke}=s_k/w_e, \forall k\in\mathcal{K},e\in\mathcal{E}$ as a parameter that reflects the influence of caching content for flow $k$ on EC $e$. Therefore, constraint \eqref{LP:con2} can be transformed in the form of $q_{ke}$, which becomes $\sum_kq_{ke}\cdot x_{ke}<1, \forall e\in\mathcal{E}$. Clearly, larger available caching space $w_e$ results in smaller $q_{ke}$. Similarly, for the link bandwidth in c), we introduce a new parameter $r_{kl}=b_k/c_l,\forall k\in\mathcal{K},l\in\mathcal{L}$ to capture the link congestion level. We can now arrange $p_{ka}$, $q_{ke}$, and $r_{kl}$ as a matrix $\boldsymbol{T}$ that represents the features to be learned by CNN, as follows:
$$
\boldsymbol{T}=
\setlength{\arraycolsep}{1pt}
\left[
\begin{array}{ccc|ccc|ccc}
p_{11}&\cdots&p_{1|\mathcal{A}|}&q_{11}&\cdots&q_{1|\mathcal{E}|}&r_{11}&\cdots&r_{1|\mathcal{L}|}\\
p_{21}&\cdots&p_{2|\mathcal{A}|}&q_{21}&\cdots&q_{2|\mathcal{E}|}&r_{21}&\cdots&r_{2|\mathcal{L}|}\\
\vdots&\ddots&\vdots&\vdots&\ddots&\vdots&\vdots&\ddots&\vdots\\
p_{|\mathcal{K}|1}&\cdots&p_{|\mathcal{K}||\mathcal{A}|}&q_{|\mathcal{K}|1}&\cdots&q_{|\mathcal{K}||\mathcal{E}|}&r_{|\mathcal{K}|1}&\cdots&r_{|\mathcal{K}||\mathcal{L}|}\\
\end{array}
\right]
$$

% Now, we can build the grayscale image based on the above matrix after normalization. 
Now, the element in the above matrix $\boldsymbol{T}$ can be viewed as the shades of gray, in which the range is from $0\%$ (white) to $100\%$ (black). Then, $\boldsymbol{T}$ is converted to a grayscale image.
In order to generate the matrix $\boldsymbol{T}$, we need to know the user moving probability $p_{ka}$. This probability can be predicted by exploring historical data, such as~\cite{chen2017caching} and \cite{al2018move}, and this technique is beyond the scope as well as complementary of this paper. Moreover, regarding the memory utilization $q_{ke}$ and link utilization $r_{kl}$, they can be achieved from the network management element. For instance, in the environment of multi-access edge computing (MEC) system, the virtualisation infrastructure manager in MEC host is responsible for managing storage and networking resources \cite{etsi2020mec}, i.e. $w_e$ and $c_l$. Once receiving the user required caching size $s_k$ and bandwidth $b_k$, we can calculate $q_{ke}$ and $r_{kl}$ accordingly.
Figure \ref{fig:encoding} shows a typical generated image where $|\mathcal{K}|=10$, $|\mathcal{A}|=8$, $|\mathcal{E}|=7$ and $|\mathcal{L}|=9$. In Figure \ref{fig:encoding}, the darker colour of a pixel, the larger is the value of that element in the matrix.
\begin{figure}[t]
	\centering
	% trim order: left bottom right top
	\includegraphics[trim=3mm 12mm 8mm 0mm, clip, width=0.4\textwidth]{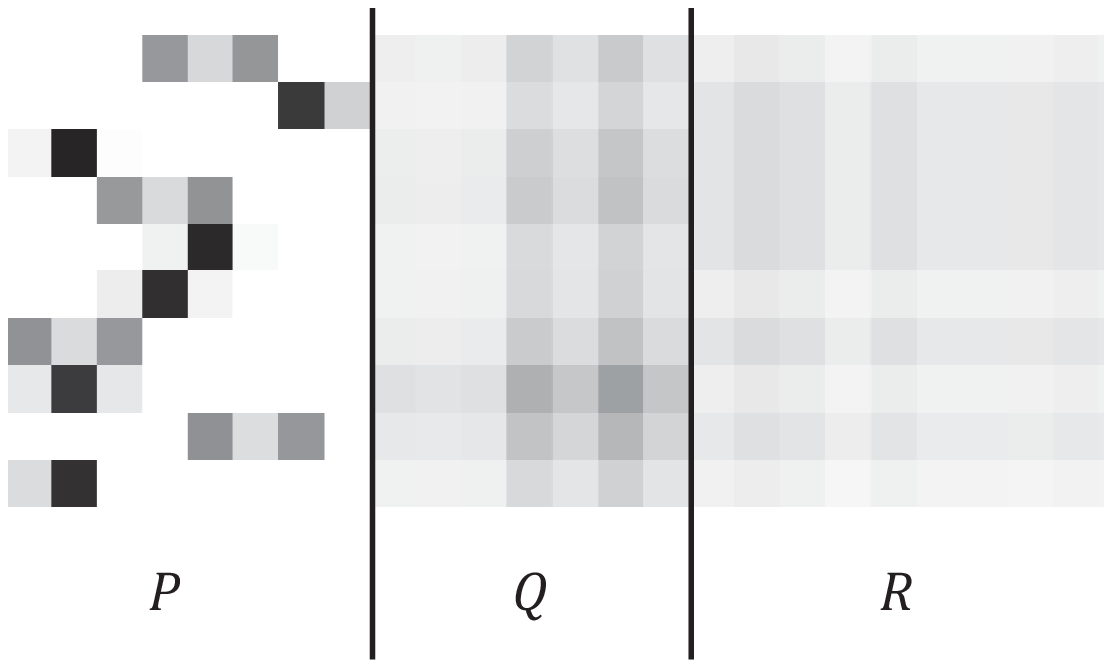}
	\caption{Feature encoding to grayscale image.}
	\label{fig:encoding}
\end{figure}
 
\subsubsection{Proposed CNN Structure}
\label{subsubsec:CNN}
The constructed image is applied as a feature to be learned by our CNN in step ($iii$). A common image recognition task has some key properties: a) some patterns are much smaller than the whole image; b) the same pattern appears in different positions; c) downsampling some pixels will not change the classification results. In particular, the convolutional layer satisfies properties a) and b) while the pooling layer matches c).
Our generated grayscale images also have properties a) and b). For example, the pixel in the middle of Figure \ref{fig:encoding} represents the effect on memory space utilization when caching content in the corresponding EC, which fits property a); and the dark pixel appears in different regions of the image, which matches property b).  
However, each pixel in our grayscale images provides essential information for content placement, and subsampling these pixels will change the CNN classification object. As a result, property c) does not hold in our images.
Therefore, we keep the convolutional layer in our CNN structure but we remove the pooling layer.
Figure \ref{fig:CNN} shows the structure of the proposed CNN which contains the following layers: 
\begin{figure}[t]
	\centering
	% trim order: left bottom right top
	\includegraphics[trim=0mm 0mm 0mm 0mm, clip, width=\textwidth]{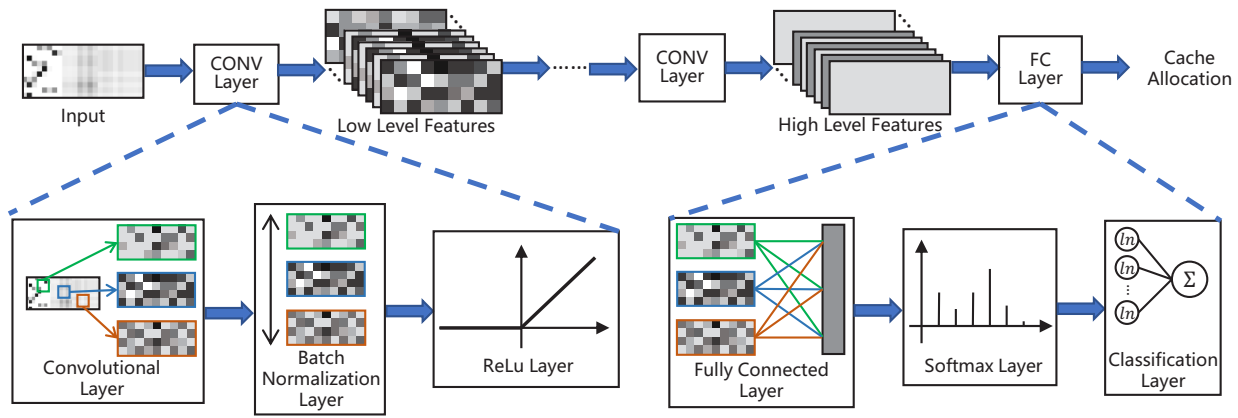}
	\caption{CNN architecture.}
	\label{fig:CNN}
\end{figure}
\begin{itemize}
\item \emph{Input Layer:} this layer specifies the image size and applies data normalization. In our model, the input is a grayscale image, so the channel size is one. The height is $|\mathcal{K}|$ and the width is $|\mathcal{A}|+|\mathcal{E}|+|\mathcal{L}|$.
\item \emph{CONV Layer:} this layer generates a successively higher-level abstraction of the input figure, or feature map (as commonly named). In our work, the CONV layer is composed of a) a convolutional layer, applying sliding convolutional filters to the input; b) a batch normalization layer, normalizing the input to speed up network training and reduce the sensitivity to network initialization; and c) a ReLu layer, introducing activation function rectified linear unit (ReLu) as the threshold to each element of the input. Generally, a CNN structure contains deep CONV layers. But we only use one CONV layer in our CNN. The effect of the different number of CONV layers is discussed in Section \ref{sec:investigations}.
\item \emph{FC Layer:} this layer produces the output of the CNN and consists of a) a fully connected layer that combines the features of a grayscale image to select the EC for caching; b) a softmax layer that normalizes the output of the preceding layer. It is worth noting that the output of softmax layer is a vector that contains non-negative numbers and the summary of vector elements equals to one. Therefore, the output vector can be used as a probability for classification; c) a classification layer that uses the output of the softmax layer for each grayscale image to assign it to one of the potential ECs and then compute the overall loss.
\end{itemize}

\subsubsection{Loss Function}
In step ($iv$), we employ the following cross-entropy loss function to estimate the gap between CNN prediction and the optimal solution, 
\begin{equation}
\label{eq:loss}
    \Upsilon(k)\!=\!-\!\sum_{i=1}^{|\mathcal{I}|}\sum_{e=1}^{|\mathcal{E}|}\!x^i_{ke}\!\ln{\hat{x}^i_{ke}}\!+\!\frac{\lambda}{2}\!\boldsymbol{W}^T\!\boldsymbol{W}, k\!=\!1,2,\cdots,|\mathcal{K}|,
\end{equation}
where $\mathcal{I}$ represents the training dataset, $\lambda$ is the L2 regularization (a.k.a Tikhonov regularization) factor to avoid overfitting, $\boldsymbol{W}$ is the weight vector, $x^i_{ke}$ is the optimal allocation of $i^{th}$ network scenario, and $\hat{x}^i_{ke}$ is the predicted result of CNN accordingly. 
During training, the weight vector $\boldsymbol{W}$ is updated recursively towards the reduction of loss function \eqref{eq:loss}.
Note that the original optimization problem \eqref{LP:main_MILP} contains five decision variables but we pick $x_{ke}$ as the label to train the CNN, because the rest can be approximated by $x_{ke}$.
For example, two auxiliary variables $\chi_{ke}$ and $t_e$ are determined via constraint \eqref{LP:con9}$\sim$\eqref{LP:con11} and \eqref{con:t_e} respectively. The upper-bound of $z_{kae}$ is limited by constraint \eqref{LP:con5}. Generally, the penalty of cache-miss is larger than the cache-hit transmission cost given that the data center is further away from AR than the host EC. Thus, most potential connected ARs would be taken into account except the AR with less attached probability (i.e. $p_{ka}$ is less than a threshold, which can be defined as the ratio between caching cost and caching gain~\cite{vasilakos2012proactive}). 
After that, the value of $z_{kae}$ is fixed. Moreover, constraints \eqref{LP:con6} and \eqref{LP:con7} provide the upper- and lower- bound of $y_{kl}$ in the form of $z_{kae}$.

\subsection{Testing Process}
\label{sec:subtesting}
Directly combining the output of each CNN may cause collisions since the aforementioned MILP model decomposition does not consider the internal effect of decision variables. In this subsection, two different approaches, reduced MILP (rMILP) and hill-climbing local search (HCLS), are proposed respectively to avoid the conflict.

\subsubsection{CNN-rMILP}
\label{sec:sub_ReducedMILP}
As seen from Figure \ref{fig:CNN-MILP}, the testing process consists of five steps where the first three steps are explained in Section \ref{sec:subtraining}.
\begin{figure}[t]
	\centering
	% trim order: left bottom right top
	\includegraphics[trim=0mm 0mm 0mm 0mm, clip, width=.9\textwidth]{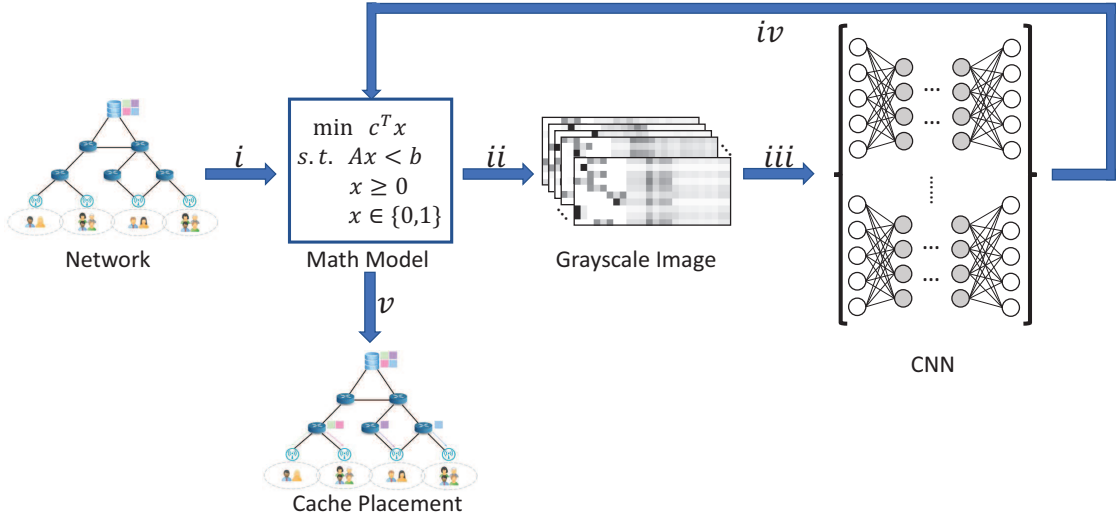}
	\caption{Testing process for CNN-rMILP.}
	\label{fig:CNN-MILP}
\end{figure}

In step ($iv$), the output of the softmax layer is a vector in which each element represents the probability of hosting content. This probability can be viewed as the confidence of prediction. For example, when the CNN output is $[0.8,0,0.15,0,0.05,0]$, then, the CNN is more confident to allocate the content in the first EC (first element in the vector) which has the highest value $0.8$. 
By combining these vectors, we can derive a matrix $\boldsymbol{O}=(o_{ke})_{|\mathcal{K}|\times|\mathcal{E}|}$. Here, each element $o_{ke}$ captures the probability of caching content for flow $k$ in EC $e$. 
We ignore the zero and iota elements because they are unlikely to be selected as caching host according to the CNN prediction. In order to filter out these elements, a unit step function $H(\cdot)$ is introduced with threshold $\delta$:
$$
H(o_{ke})=
\begin{cases}
1,  &o_{ke}\leq\delta,  \\
0,  &o_{ke}>\delta. 
\end{cases}
$$

After the filter $H(o_{ke})$, all $1$ elements are viewed as candidates to host content. As done in~\cite{lei2019learning}, we perform a global search by adding a new constraint in the original optimization problem \eqref{LP:main_MILP}, which becomes:
\begin{subequations} \label{LP:new}
	\begin{align}
		&\mathop{\min} J(\chi_{ke},z_{kae}). \\
		\textrm{s.t.}\quad &\eqref{LP:con1}\sim\eqref{LP:con8},\eqref{LP:con9}\sim\eqref{LP:con11},\eqref{con:t_e},\eqref{LP:con12}, \nonumber\\
		\label{LP:added}
		&x_{ke}\leq H(o_{ke}),\forall k\!\in\!\mathcal{K},e\!\in\!\mathcal{E}.	
	\end{align}
\end{subequations}

In step ($v$), the new optimization problem \eqref{LP:new} can be efficiently solved when $\big(H(o_{ke})\big)_{|\mathcal{K}|\times|\mathcal{E}|}$ is a sparse matrix since the feasible region of the decision variable $x_{ke}$ is greatly reduced. Note that, although the introduction of constraint \eqref{LP:added} enables a reduction in search space, the feasible solutions may also be eliminated by constraint \eqref{LP:added}, especially when the CNN is not well-trained or the threshold $\delta$ in function $H(\cdot)$ is high. As a result, problem \eqref{LP:new} becomes unsolvable. 
In case of infeasibility, constraint \eqref{LP:added} is slacked and we need to resolve the original optimization model \eqref{LP:main_MILP}. 

\subsubsection{CNN-HCLS}

\begin{figure}[t]
\centering
% trim order: left bottom right top
\includegraphics[trim=0mm 25mm 0mm 0mm, clip, width=.9\textwidth]{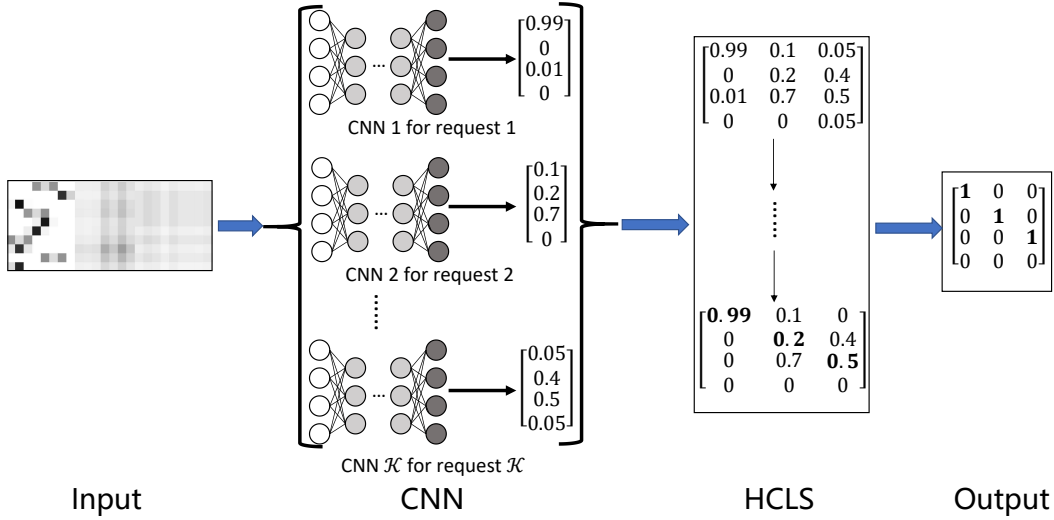}
\caption{Testing process for CNN-HCLS.}
\label{fig:CNN-HCLS}
\end{figure}

The CNN-HCLS process is shown in Figure \ref{fig:CNN-HCLS}.
We define the \textit{state space} as a collection of all possible solutions and non-solutions, i.e. all assignment possibilities of variable $x_{ke}$; the \textit{successor state} is the ``locally'' accessible states from a current state with the largest probability, where ``locally'' means the successor state only differs by one-user cache allocation; finally, the \textit{score} $S$ is defined as a piecewise linear function which is added to the total cost $J$ as a penalty cost to measure the impact of invalid constraints \eqref{LP:con2} and \eqref{LP:con3} in the form of $q_{ke}$ and $r_{kl}$:
\begin{equation}
\label{fml:new_cost}
S\!=\!\gamma\!\max\!\left\{\!0\!,\!\left(\!\sum_{e\in\mathcal{E}}\!\sum_{k\in\mathcal{K}}\!q_{ke}\!x_{ke}\!-\!1\!\right)\!,\!\left(\!\sum_{l\in\mathcal{L}}\!\sum_{k\in\mathcal{K}}\!r_{kl}\!y_{kl}\!-\!1\!\right)\!,\!\left(\!\sum_{e\in\mathcal{E}}\!\sum_{k\in\mathcal{K}}\!q_{ke}\!x_{ke}\!-\!1\!\right)\!+\!\left(\!\sum_{l\in\mathcal{L}}\!\sum_{k\in\mathcal{K}}\!r_{kl}\!y_{kl}\!-\!1\!\right)\!\right\}\!+\!J\!,
\end{equation}
where $\gamma$ is the penalty factor for invalidated constraints. Similar to the processing in CNN-rMILP, to reduce the number of attempted assignments, the matrix is filtered via the unit step function $H(\cdot)$ with threshold $\delta$. The details are described in Algorithm 1.

{\renewcommand\baselinestretch{1.35}
\begin{algorithm}[t]
\caption{Hill-Climbing Local Search (HCLS)}
\label{alg:HCLS}
\KwData{predicted conditional probability $\boldsymbol{O}=(o_{ke})$, threshold of probability $\delta$, variables in Table \ref{tab:Notations}}
\KwResult{flow assignment $\boldsymbol{X}=(x_{ke})$}
Construct assignment matrix $\boldsymbol{X}\leftarrow \boldsymbol{X}\left(o_{ke}\geq\delta\right)$\;\label{alg:state1}
Select the largest number on each column as the initial state\;\label{alg:state2}
Evaluate the cost $S$ via \eqref{fml:new_cost} of the successor states\;\label{alg:state3}
Move to a successor state with less cost\;\label{alg:state4}
Repeat from Step \ref{alg:state3} until no further improvement of cost are possible\;\label{alg:state5}
Set all the elements in the terminated state to be 1, while the rest 0\; \label{alg:state6}
\end{algorithm}
\par}

Considering Figure \ref{fig:HCLS-instance} as an instance, we combine the output of CNNs in Figure \ref{fig:CNN-HCLS} as a $4\times3$ matrix, where rows indicate candidate ECs and the columns represent different requests. In step \ref{alg:state1} of Algorithm \ref{alg:HCLS}, each predicted probability is compared with threshold $\delta=0.1$ and all element less than this value are set to zero. Then, we select the largest value (shown in red and bold font) in each column as the initial case, i.e. $0.99$, $0.7$ and $0.5$ in three columns respectively. 
Since $0.99$ is the first element in the first column, the initial assignment will allocate the first request in the first EC. Similarly, the second and third requests are cached in the third EC. 
In step \ref{alg:state2} because $0.99$ is the only non-zero item in the first column, there is no successor in this direction. For the second column, the second-largest number is $0.2$ with position index (2,2) in the initial state, so one successor state is keeping the allocation of first and third requests in the original location but caching the second request in the second EC, i.e. branch $i$ in Figure \ref{fig:HCLS-instance}. Similarly, we get another successor state following the third column as branch $ii$. In the following two steps, we evaluate the cost $S$ of these three assignments via formula \eqref{fml:new_cost}, where successor state $i$ results in a smaller cost. Then, we check further successor states such as $iii$ and $iv$ recursively based on current state $i$. In this example, the algorithm terminates at the local optimal solution (shown in a blue dot rectangular) because further improved states cannot be found. In the last step, the bold and red colored positions are selected as caching ECs.

\begin{figure}[t]
\centering
% trim order: left bottom right top
\includegraphics[trim=0mm 160mm 0mm 130mm, clip, width=\textwidth]{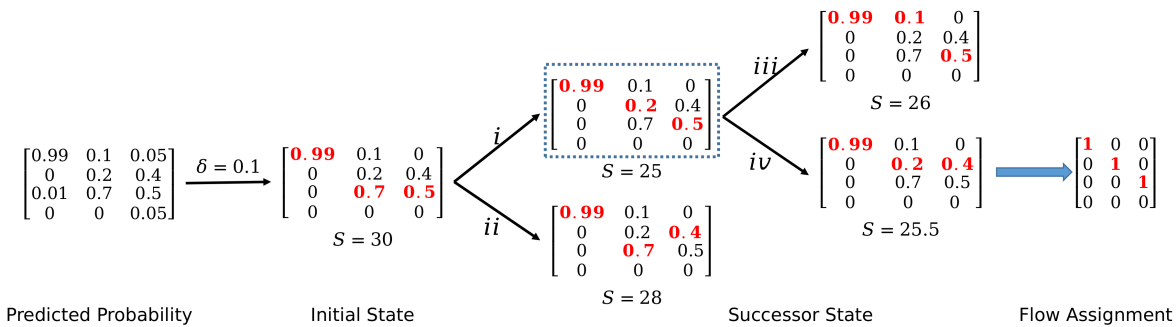}
\caption{An instance of CNN-HCLS.}
\label{fig:HCLS-instance}
\end{figure}

As can be seen from the above instance, there are at most $|\mathcal{K}|$ branches from the previous state to the successor state, and the maximum number of states is $|\mathcal{E}|$, which represents all potential ECs are enumerated. Moreover, the computation of cost $S$ in each branch takes constant time. Therefore, the time complexity of Algorithm \ref{alg:HCLS} is $O(|\mathcal{K}|\cdot|\mathcal{E}|)$. Once the network topology is determined, $|\mathcal{E}|$ becomes constant and, then, the running time is reduced to $O(|\mathcal{K}|)$, i.e. linear time complexity.

In summary, we have proposed two different approaches, CNN-rMILP and CNN-HCLS, to decide cache content placement. While both of them utilize the output of trained CNN, the CNN plays different roles: in CNN-rMILP, CNN is used to eliminate improbable ECs as constraint \eqref{LP:added}; in CNN-HCLS, CNN provides the searching direction in Algorithm \ref{alg:HCLS}. Note that the proposed CNN-rMILP and CNN-HCLS provide sub-optimal solutions of \eqref{LP:main_MILP}, and should not be considered as solving a NP-hard problem with the optimal solution.
\section{Numerical Results}
\label{sec:investigations}

\subsection{Simulation Setting}
We will now compare the performance of the optimal decision making derived from the MILP model \eqref{LP:main_MILP} with the proposed CNN-rMILP and CNN-HCLS. To further benchmark the results we also compare with a greedy caching algorithm (GCA) which attempts to allocate each request to its nearest EC, as detailed in Algorithm \ref{alg:greedy}. Additionally, to address the improvements of CNN-rMILP and CNN-HCLS schemes, we report the resulting quality of using only the CNN (named as pure-CNN in the following). We assume a nominal mesh tree-like mobile network topology where user mobility is taking place on the edge of the network. Based on the mobility patterns of the users, we apply the different techniques, i.e., MILP (benchmark), pure-CNN, CNN-rMILP, CNN-HCLS, and GCA respectively, to perform proactive edge cloud caching of popular content. 
The simulation parameters are summarized in Table \ref{tab:Network_Parameters}. 

{\renewcommand\baselinestretch{1.35}
\begin{algorithm}[t]
\caption{Greedy Caching Algorithm (GCA)}
\label{alg:greedy}
\KwData{variables in Table \ref{tab:Notations}}
\KwResult{flow assignment $\boldsymbol{X}=(x_{ke})$}
\For{each $k\in\mathcal{K}$}{
Find the maximum $p_{ka}$ and related $a$, $p_{ka}\!\leftarrow\!0$\;
Build an EC queue $L_e$ from $a$ by Dijkstra algorithm\;
	\For{each $e\in L_e$}{
		\If{$s_k \leq w_e$}{
			$x_{ke}\!\leftarrow\!1$, $w_e\!\leftarrow\!w_e\!-\!s_k$\;
			\textbf{break}\;
		}
	}
}
\end{algorithm}
\par}

\begin{table}[t]
\centering
\caption{\label{tab:Network_Parameters}Network parameters \cite{wang2019proactive}.}
\begin{tabular}{l|c}
\hline
\textbf{Parameter}&\textbf{Value} \\
\hline
Degree per node & 1$\sim$5 \\
Number of mobile users ($|\mathcal{K}|$)& \{5,10,15,20\} \\
Number of links ($|\mathcal{L}|$)& 20\\
Number of access routers ($|\mathcal{A}|$)& 7\\
Number of edge clouds ($|\mathcal{E}|$)& 6\\
Threshold of prediction probability ($\delta$) & 0.001 \\
Size of user request content ($s_k$) & [10,50] MB \\
Available cache size in EC ($w_e$) & [100,500] MB\\
User request transmission bandwidth ($b_k$) & [1,10] Mbps\\
Link available capacity ($c_l$) & [50,100] Mbps\\
\hline
\end{tabular}
\end{table}

The dataset is generated by solving the optimization problem \eqref{LP:main_MILP} via nominal branch-and-bound approaches with different user behaviour and network utilization levels, i.e., $p_{ka}$, $s_k$, $b_k$, $w_e$ and $c_l$, which follow uniform distributions. In the subsequent results, we first generate 1280 samples for a scenario with $5$ users, out of which $1024$ samples are used for training by using the structure of each CNN illustrated in Figure \ref{fig:CNN}; $128$ samples are used as the validation set for hyperparameter tuning, as illustrated in Section \ref{subsec:hyper}; and the remaining $128$ samples are used for performance testing, which is discussed in Section \ref{subsec:compa}. 
For the scenarios with $10$, $15$, and $20$ users, we construct $128$ samples for each case accordingly.
We note that the input image size of the trained CNN is $5\times33$ (i.e. $|\mathcal{K}|\times(|\mathcal{A}|+|\mathcal{E}|+|\mathcal{L}|)$) and the images for more than $5$ requests exceed this size.  
For the purpose of matching the input layer of the trained CNN, the large grayscale image can be divided into partitions and the height of each sub-image is considered as 5 to fit the input size.
Next, the trained CNNs (i.e. the so-called pure-CNN) are applied for predicting the allocation of sub-images and, then, updating the unassigned sub-images based on the assignment in a cascade manner. 
The behaviour of the traffic flow such as the moving probability $p_{ka}$ is independent from the caching locations, and, hence, it will remain constant during the update process. For the update of $q_{ka}$, we have   
\begin{equation}
\label{eq:update_ec}
    q_{ke}=\frac{s_k}{w_e-\sum_{k'\in\mathcal{K'}}s_{k'}x_{k'e}}, \forall k\in\mathcal{K}\setminus\mathcal{K'},
\end{equation}
where $\mathcal{K'}$ is the set of traffic flows who have already been assigned. Regarding the link utilization $r_{kl}$, the update is calculated in a similar fashion,
\begin{equation}
\label{eq:update_link}
    r_{kl}=\frac{b_k}{c_l-\sum_{k'\in\mathcal{K'}}b_{k'}y_{k'l}}, \forall k\in\mathcal{K}\setminus\mathcal{K'}.
\end{equation}

From \eqref{eq:update_ec}, we can see that the value of $q_{ke}$ could become negative which indicates that the previous allocation is invalid. This, in turn, results in a congestion at EC $e$, because it violates the constraint $w_e\!-\!\sum_{k'\in\mathcal{K'}}s_{k'}x_{k'e}\!>\!0$. The same holds also for variable $r_{kl}$ as well. In this special case, the EC or link is not involved in next sub-image assignment. 
We keep caching contents and updating network parameters $q_{ka}$ as well as $r_{kl}$ until all users' requests are satisfied, or we terminate the cache placement when no additional network resources (i.e. caching memory space, link bandwidth) are available.
The pure-CNN, instead of CNN-HCLS or CNN-rMILP, is used to update the image in order to avoid error accumulation. This can be explained from Figure \ref{fig:HCLS-instance}, where the output of the pure-CNN configuration is a probability matrix and many different assignments (as long as the cell is not zero) could be taken into consideration but for CNN-HCLS and CNN-rMILP, the end result is a binary matrix. Once the allocation for the previous sub-image is inappropriate, the error is accumulated and it affects the remaining sub-images. The operations when the number of end-users exceeds 5 users are summarized in Algorithm \ref{alg:update}.

{\renewcommand\baselinestretch{1.35}
\begin{algorithm}[t]
\caption{Augmenting Allocations for $K$ Requests}
\label{alg:update}
\KwData{Grayscale image $\boldsymbol{I}$}
\KwResult{Flow assignment $\boldsymbol{X}=(x_{ke})$}
Separate the image into $\ceil*{|\mathcal{K}|/5}$ sub-images\;
\While{any sub-image not assigned}{
Call pure-CNN to do prediction for one of the unassigned sub-image\;
Update unassigned sub-images based on \eqref{eq:update_ec} and \eqref{eq:update_link}\;
}
Combine the predictions into a $|\mathcal{K}|\times|\mathcal{E}|$ matrix\;
Call HCLS (Algorithm \ref{alg:HCLS}) or rMILP (Section \ref{sec:sub_ReducedMILP})\;
\label{alg:STATE_call}
\end{algorithm}
\par}

In contrast to the nominal multi-label classification problem, in our case, some misclassification is still acceptable if the total cost $S$ can be deemed as competitive. 
We first compare the computation complexity among these algorithms which is measured by the average testing time. For pure-CNN, the time tracking is from the grayscale image loading in the input layer to the prediction produced in output layer excluding the training time. When it comes to CNN-rMILP and CNN-HCLS, the processing time also includes the global search by solving reduced MILP and local search through hill-climbing respectively. Moreover, the average total cost with invalid constraint penalties \eqref{fml:new_cost} is calculated. Then, we derive the feasible ratio which is defined as the percentage of assignments satisfying the constraints. In turn, this will shed light on the network congestion after allocation. For example, $100\%$ indicates adequate network availability, while $60\%$ indicates a significant congestion. 

Additionally, the maximum total cost difference with the optimal solution is analyzed, which represents the performance gap in the worst case. We also compare the average number of decision variables for the benchmark and the CNN-rMILP in the $128$ testing samples, which reflects the search space reduction. Finally, some typical classification evaluation metrics are included, such as accuracy, precision, recall, and $F_1$ score\footnote{To put computational times in perspective we note that simulations run on MATLAB 2019b in a 64-bit Windows 10 environment on a machine equipped with an Intel Core i7-7700 CPU 3.60 GHz Processor and 16 GB RAM}. According to \cite{zhang2013review}, four basic quantities, namely the true positive ($T^+$), false positive ($F^+$), true negative ($T^-$) and false negative ($F^-$) are defined as follows, for each EC $e$:
\begin{align*}
&T^+_e\!=\!\sum_{i=1}^{|\mathcal{T}|}|\pi_{ie}\!\in\!X_i\!\cap\!\pi_{ie}\!\in\!\hat{X}_i|,\quad
F^+_e\!=\!\sum_{i=1}^{|\mathcal{T}|}|\pi_{ie}\!\notin\!X_i\!\cap\!\pi_{ie}\!\in\!\hat{X}_i|, \\
&T^-_e\!=\!\sum_{i=1}^{|\mathcal{T}|}|\pi_{ie}\!\notin\!X_i\!\cap\!\pi_{ie}\!\notin\!\hat{X}_i|,\quad
F^-_e\!=\!\sum_{i=1}^{|\mathcal{T}|}|\pi_{ie}\!\in\!X_i\!\cap\!\pi_{ie}\!\notin\!\hat{X}_i|,
\end{align*}
where $\mathcal{T}$ represents the testing data set, $|\mathcal{T}|$ is the number of testing samples, $\pi_{ie}$ is the caching allocation, $X_i$ is the optimal solution of $i^{th}$ sample and $\hat{X}_i$ express the output of estimated algorithms. Based on these quantities, related metrics, such as accuracy ($A$), precision ($P$), recall ($R$) and $F_1$ score, can be calculated with
\begin{align*}
&A(T^+_e,F^+_e,T^-_e,F^-_e)=\frac{T^+_e+T^-_e}{T^+_e+F^+_e+T^-_e+F^-_e}, \quad
P(T^+_e,F^+_e,T^-_e,F^-_e)=\frac{T^+_e}{T^+_e+F^+_e},\\ &R(T^+_e,F^+_e,T^-_e,F^-_e)=\frac{T^+_e}{T^+_e+F^-_e}, \quad
F_1(T^+_e,F^+_e,T^-_e,F^-_e)=\frac{2\times T^+_e}{2\times T^+_e+F^-_e+F^+_e}.
\end{align*}

Let $f$ be one of the four metrics, i.e., $A$, $P$, $R$ or $F_1$. The macro-averaged and micro-averaged version of $f$ are calculated as follows:
\begin{equation*}
\begin{aligned}
f_{\textrm{macro}}=\frac{1}{|\mathcal{E}|}\sum_{e=1}^{|\mathcal{E}|}f(T^+_e,F^+_e,T^-_e,F^-_e), \quad f_{\textrm{micro}}=f(\sum_{e=1}^{|\mathcal{E}|}T^+_e,\sum_{e=1}^{|\mathcal{E}|}F^+_e,\sum_{e=1}^{|\mathcal{E}|}T^-_e,\sum_{e=1}^{|\mathcal{E}|}F^-_e).
\end{aligned}
\end{equation*}
The macro and micro accuracy $A$ are equivalent depending on the definition with the fact that $T\!P_e+T\!N_e+F\!P_e+F\!N_e=|\mathcal{T}|$. 
For the rest micro metrics, we have the following theorem: 
\begin{theorem}
	\label{theo:micro}
	The precision, recall, and $F_1$ score in micro-averaged version are equivalent.
\end{theorem}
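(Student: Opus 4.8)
The plan is to show that, for the micro-averaged versions, the denominators of precision, recall, and $F_1$ all collapse to the same quantity, so that the three metrics coincide. First I would observe the key structural fact for our problem: because each sub-problem is a single-label multi-classification task (constraint \eqref{LP:con1} in the decomposed form forces $\sum_{e}x_e=1$), every sample $i$ has exactly one positive EC in $X_i$ and the predictor $\hat{X}_i$ also selects exactly one EC. Consequently, summing over $e$, the total number of ``actual positives'' over all samples equals $|\mathcal{T}|$, and the total number of ``predicted positives'' over all samples also equals $|\mathcal{T}|$. This gives $\sum_{e}(T^+_e+F^-_e)=|\mathcal{T}|$ (each sample contributes its one true label to exactly one of these counts) and $\sum_{e}(T^+_e+F^+_e)=|\mathcal{T}|$ (each sample contributes its one predicted label to exactly one of these counts).

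Next I would substitute these identities into the micro definitions. Writing $\mathcal{T}^+=\sum_{e}T^+_e$, $\mathcal{F}^+=\sum_{e}F^+_e$, $\mathcal{F}^-=\sum_{e}F^-_e$, the two identities above read $\mathcal{T}^+ + \mathcal{F}^- = |\mathcal{T}|$ and $\mathcal{T}^+ + \mathcal{F}^+ = |\mathcal{T}|$, hence $\mathcal{F}^+ = \mathcal{F}^- = |\mathcal{T}| - \mathcal{T}^+$. Plugging in, $P_{\textrm{micro}} = \mathcal{T}^+/(\mathcal{T}^+ + \mathcal{F}^+)$, $R_{\textrm{micro}} = \mathcal{T}^+/(\mathcal{T}^+ + \mathcal{F}^-)$, and $F_{1,\textrm{micro}} = 2\mathcal{T}^+/(2\mathcal{T}^+ + \mathcal{F}^+ + \mathcal{F}^-)$; since $\mathcal{F}^+ = \mathcal{F}^-$, all three denominators equal $\mathcal{T}^+ + \mathcal{F}^+ = |\mathcal{T}|$, so $P_{\textrm{micro}} = R_{\textrm{micro}} = F_{1,\textrm{micro}} = \mathcal{T}^+/|\mathcal{T}|$, which proves the claim.

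The only real obstacle I anticipate is justifying the counting identities cleanly — specifically, being careful that the per-sample, per-EC indicator quantities $\pi_{ie}\in X_i$ and $\pi_{ie}\in\hat{X}_i$ are genuinely one-hot across $e$ for each fixed $i$. This follows from the decomposition of the MILP into single-label sub-problems and from the fact that the CNN's classification layer outputs exactly one EC per flow, but it should be stated explicitly since the whole argument rests on it; once that is granted, the rest is a one-line algebraic substitution. (If one instead wanted to treat the full multi-flow assignment as a single multi-label instance, one would note that each flow still contributes exactly one label on both the true and predicted sides, so the counts $\sum_e(T^+_e+F^-_e)$ and $\sum_e(T^+_e+F^+_e)$ both equal $|\mathcal{K}|\cdot|\mathcal{T}|$ rather than $|\mathcal{T}|$, and the same cancellation goes through.)
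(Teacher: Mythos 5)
Your proposal is correct and follows essentially the same route as the paper: both arguments reduce to showing $\sum_e F^+_e=\sum_e F^-_e$ as a consequence of constraint \eqref{LP:con1} forcing exactly one true and one predicted EC per flow (the paper pairs mismatched cells column-by-column, you equate the totals of actual and predicted positives, which is the same counting), followed by the same algebraic substitution into the micro-averaged definitions. Your explicit remark that the \emph{predicted} assignment is also one-hot per flow is a point the paper uses implicitly, so nothing is missing.
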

\begin{proof}
	See Appendix \ref{sec:proof_B}.
\end{proof}
Hence, in the following evaluations, the micro accuracy is omitted and the micro recall together with $F_1$ score are represented by the micro precision. 

\subsection{Hyperparameters Tuning}
\label{subsec:hyper}
The parameters in the CNN can be divided into two categories: normal parameters, such as the neuron weights, which can be learned from the training dataset; and hyperparameters, like the depth of the CONV layers, whose value should be determined during the CNN architecture design to control the training process. These parameters affect the CNN's output and, as a result, the overall network performance. Therefore, the hyperparameters have to be tuned.
After several rounds of grid search and by considering the tradeoff between simplicity and loss function, the best configuration recorded is the following: CONV layer = $1$, batch size = $64$, epoch = $30$, learning rate = $10^{-3}$. In the sequel, the impact and sensitivity of each hyperparameter on the trained CNN is investigated, whilst all other hyperparameters are fixed and the metric is set to be the loss function as previously defined in \eqref{eq:loss}.

\begin{figure}[htbp]
    \centering
    \subfigure[Training loss function with different depth.]{
        \label{fig:depth_train}
        \includegraphics[width=\figSize\textwidth]{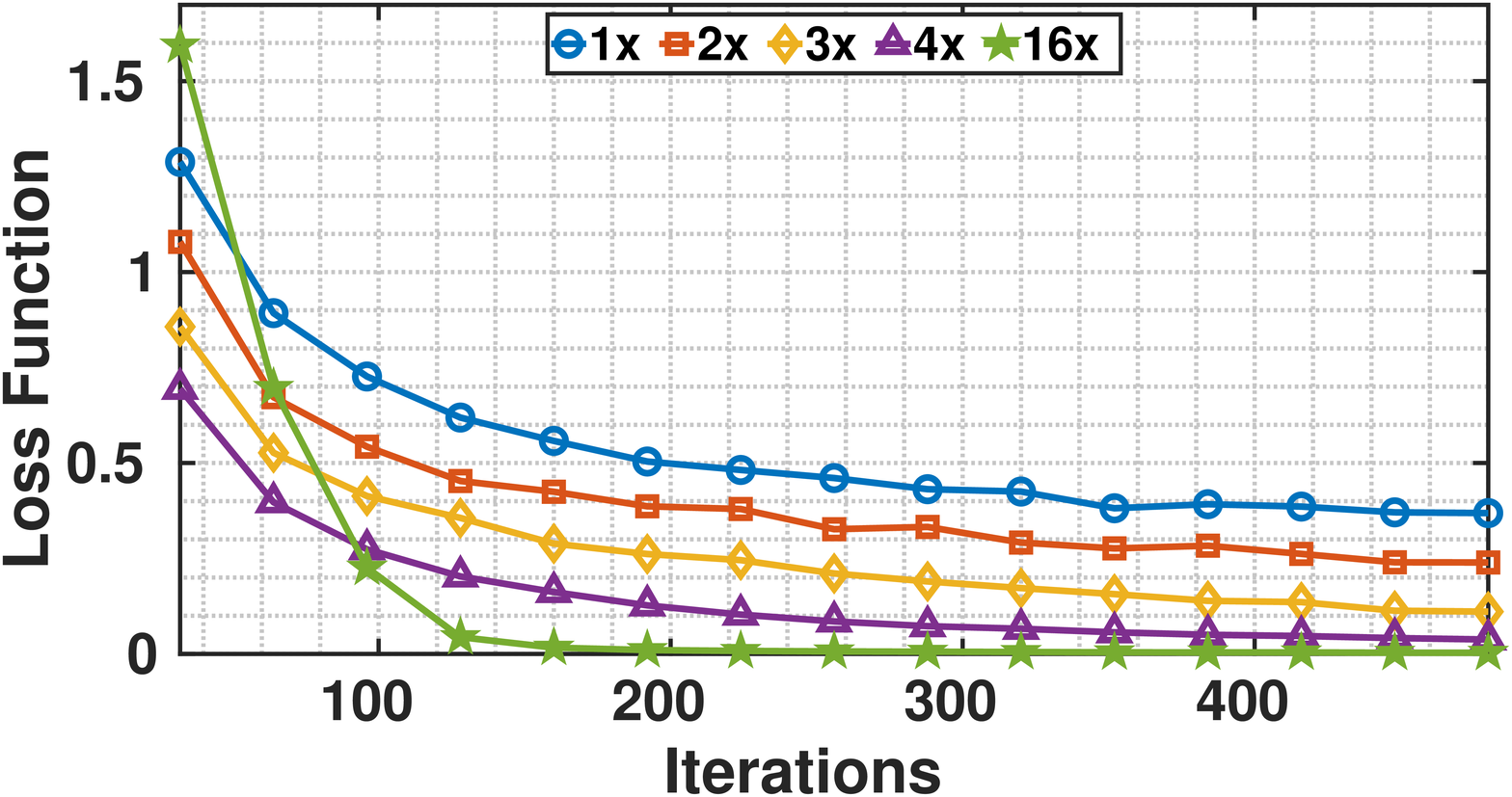}
    }
    \subfigure[Validation loss function with different depth.]{
	    \label{fig:depth_valid}
        \includegraphics[width=\figSize\textwidth]{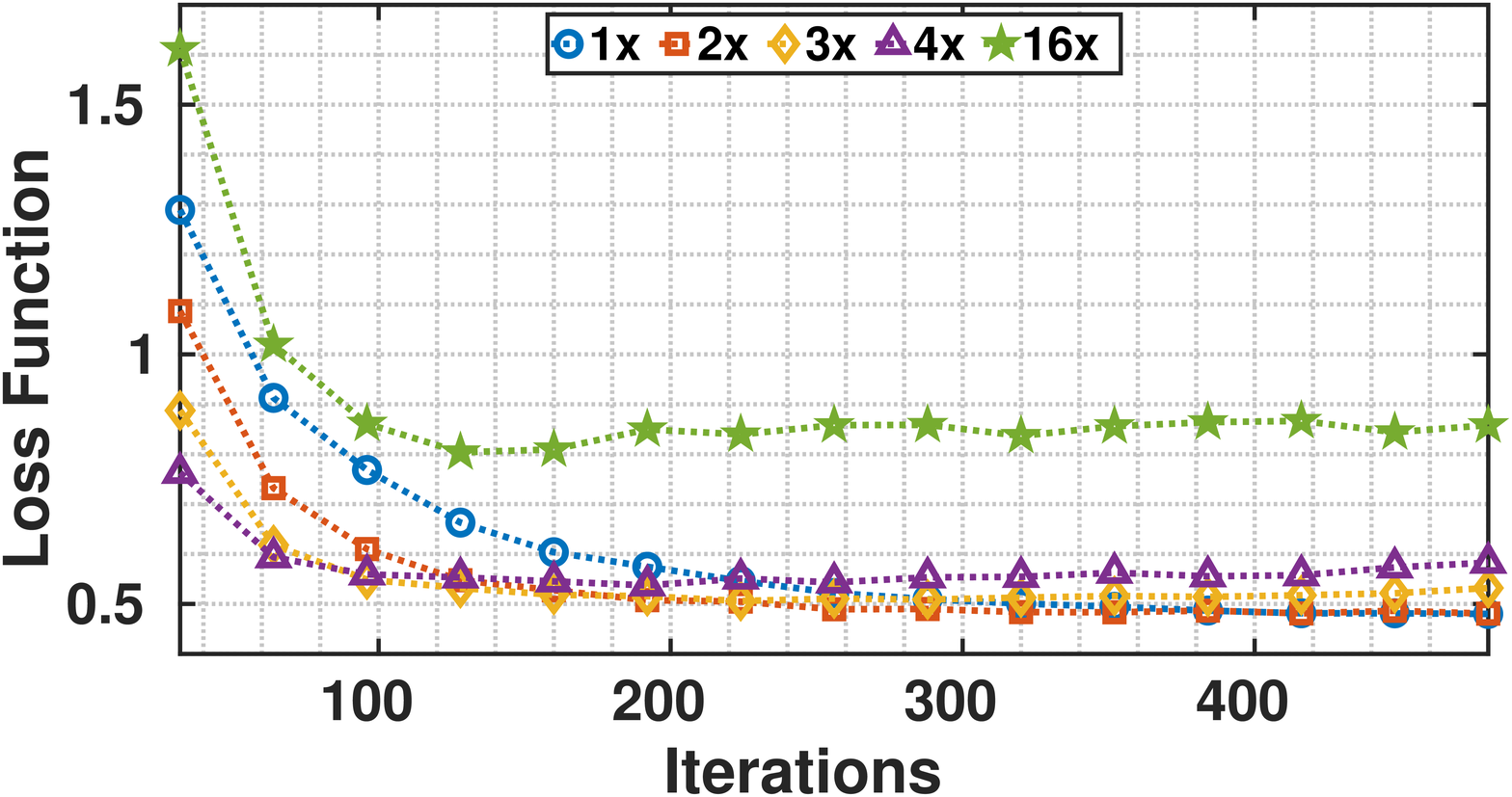}
    }
    \quad  
    \subfigure[Training time with different depth (Iterations=$480$).]{
    	\label{fig:depth_time}
        \includegraphics[width=\figSize\textwidth]{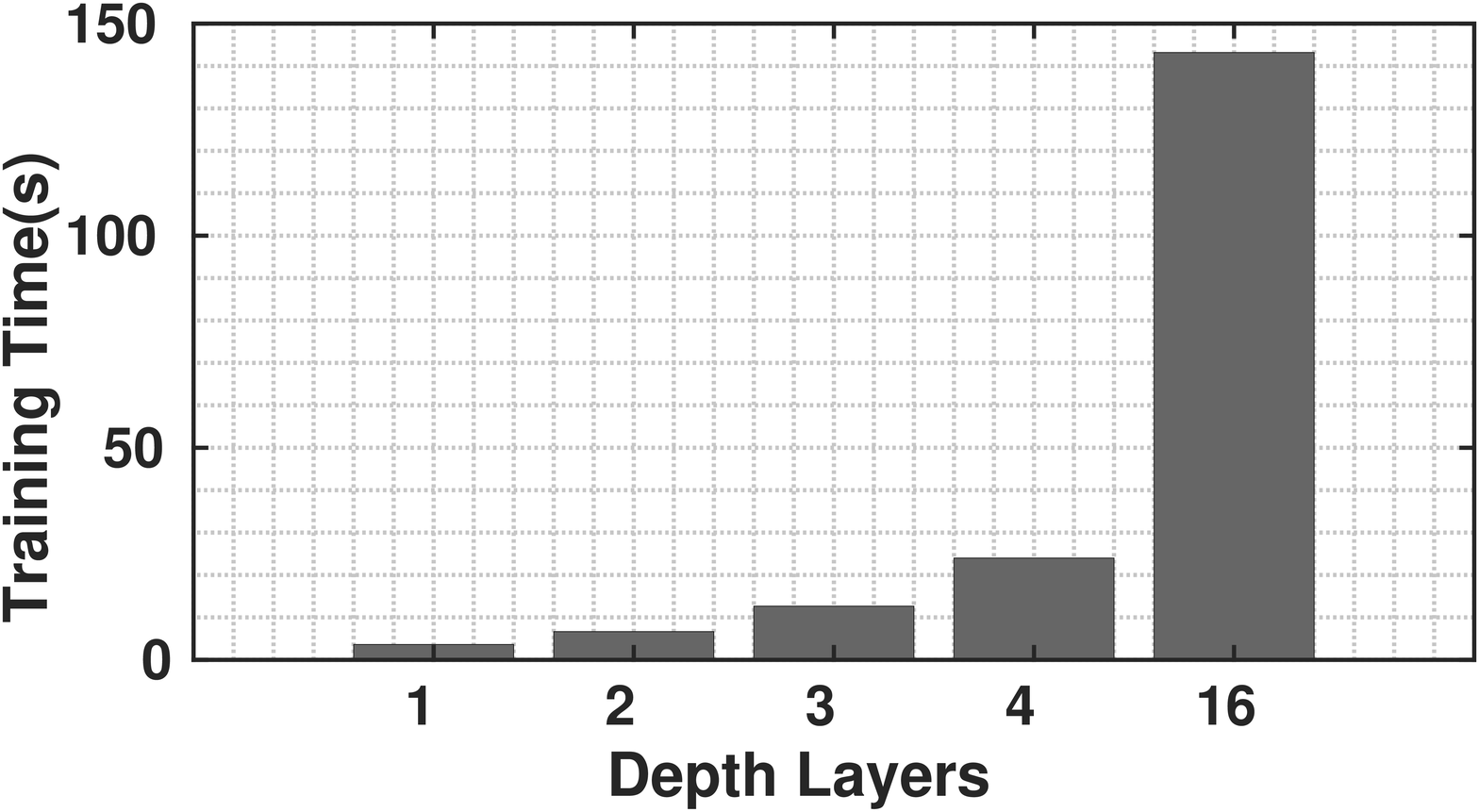}
    }
    \subfigure[Training time with different batch size (Epoch=$30$).]{
	    \label{fig:batch_time}
        \includegraphics[width=\figSize\textwidth]{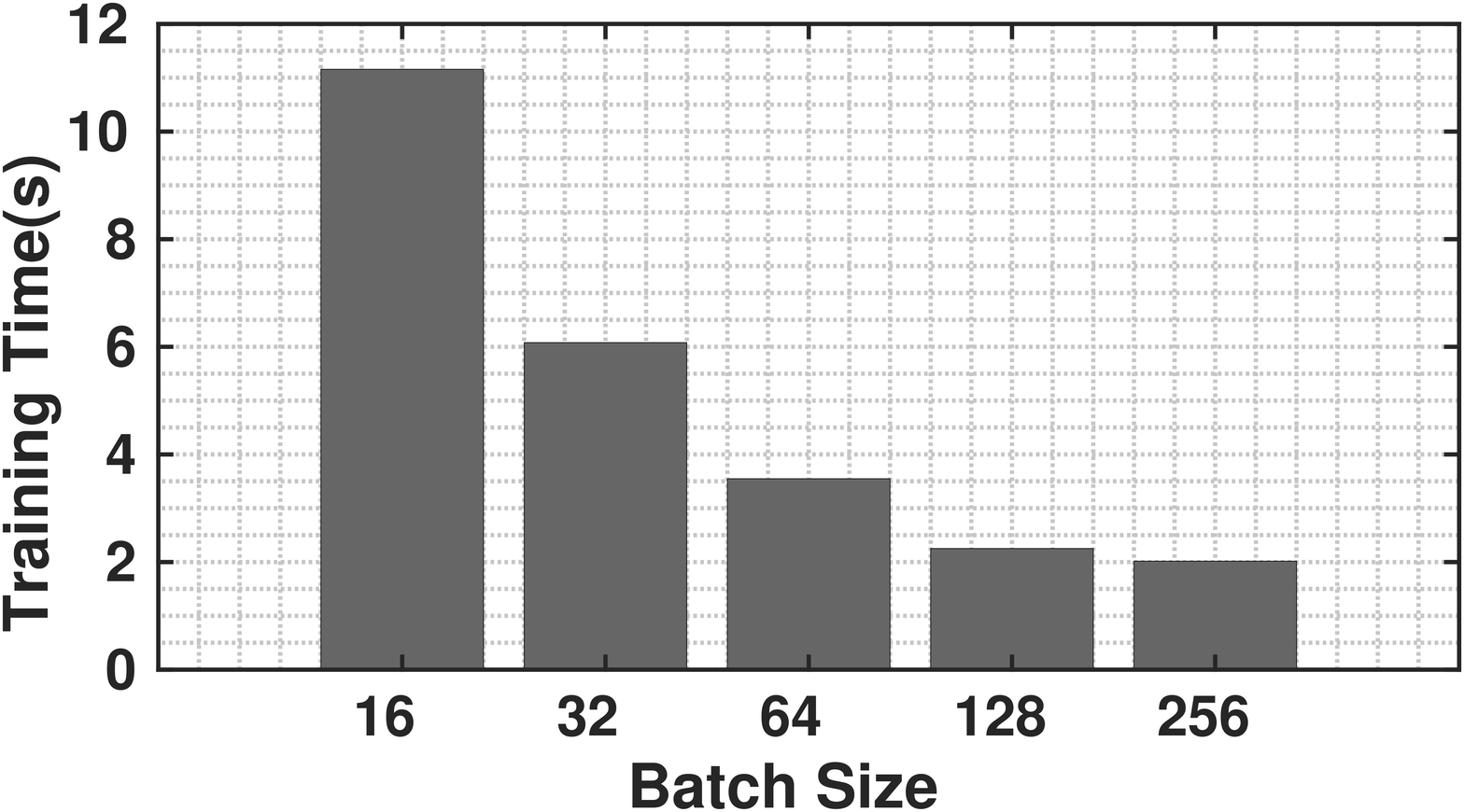}
    }
    \quad 
    \subfigure[Training loss function with different batch size.]{
    	\label{fig:batch_train}
        \includegraphics[width=\figSize\textwidth]{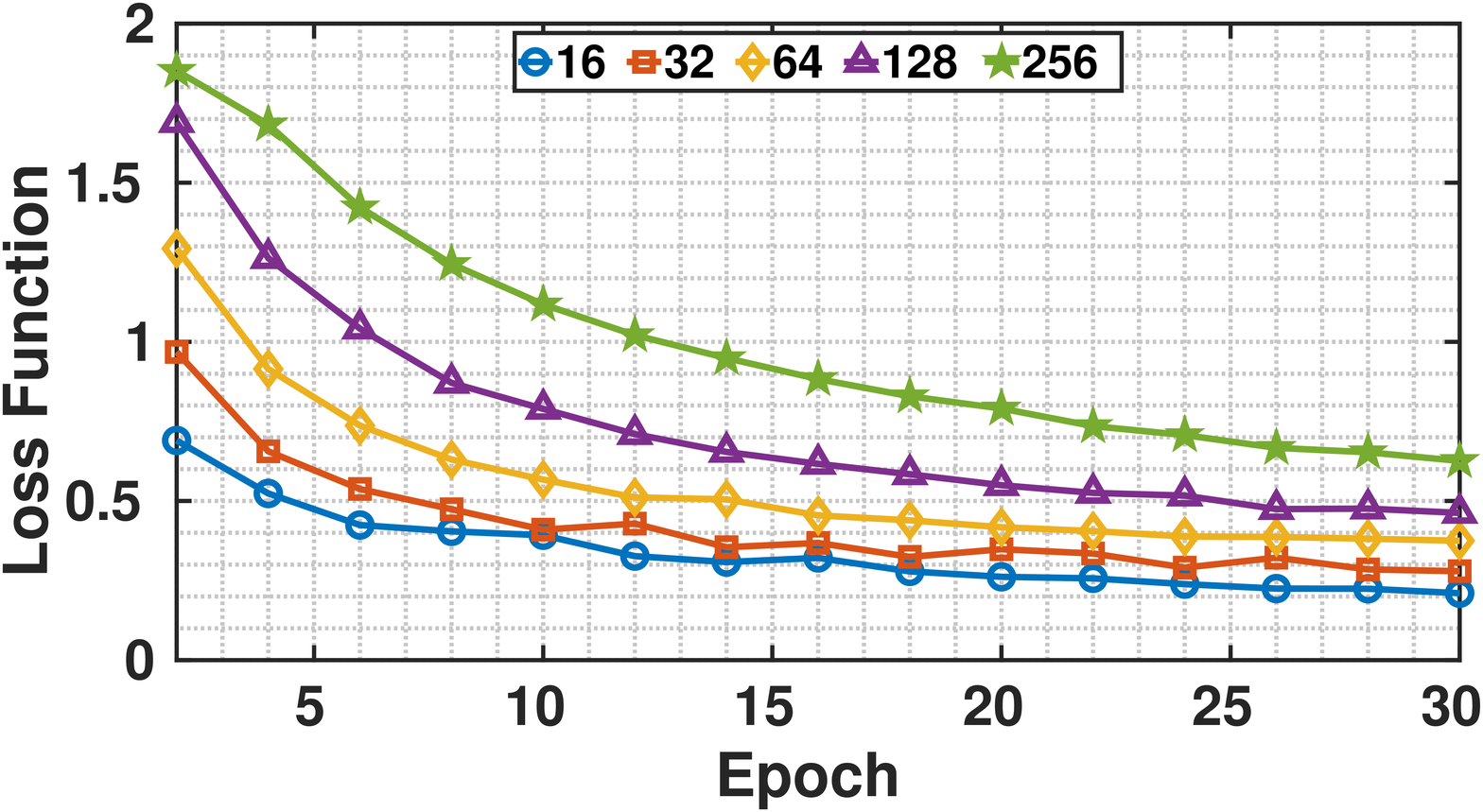}
    }
    \subfigure[Validation loss function with different batch size.]{
	    \label{fig:batch_valid}
        \includegraphics[width=\figSize\textwidth]{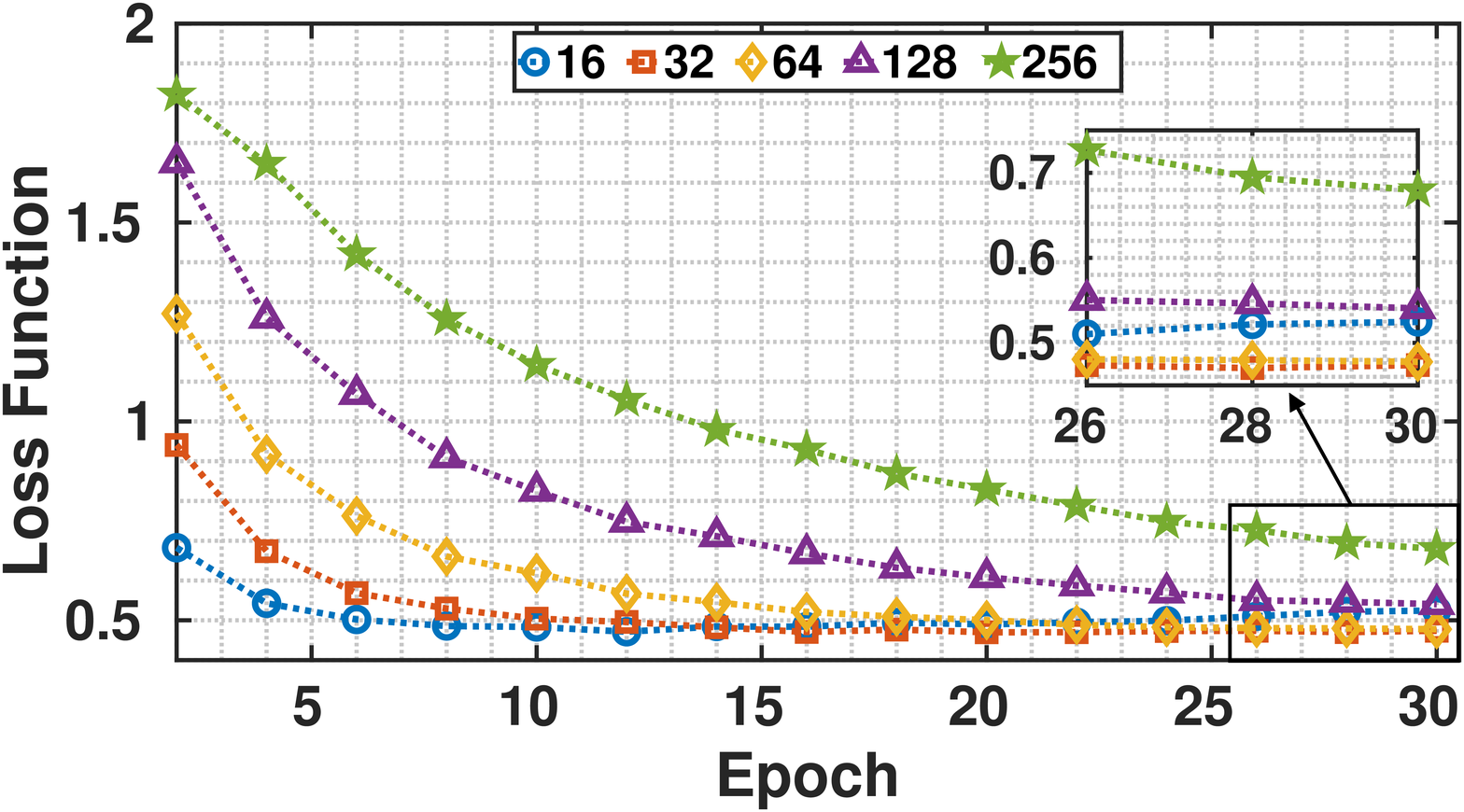}
    }
    \quad   
    \subfigure[Loss function with different epochs.]{
    	\label{fig:epoch}
        \includegraphics[width=\figSize\textwidth]{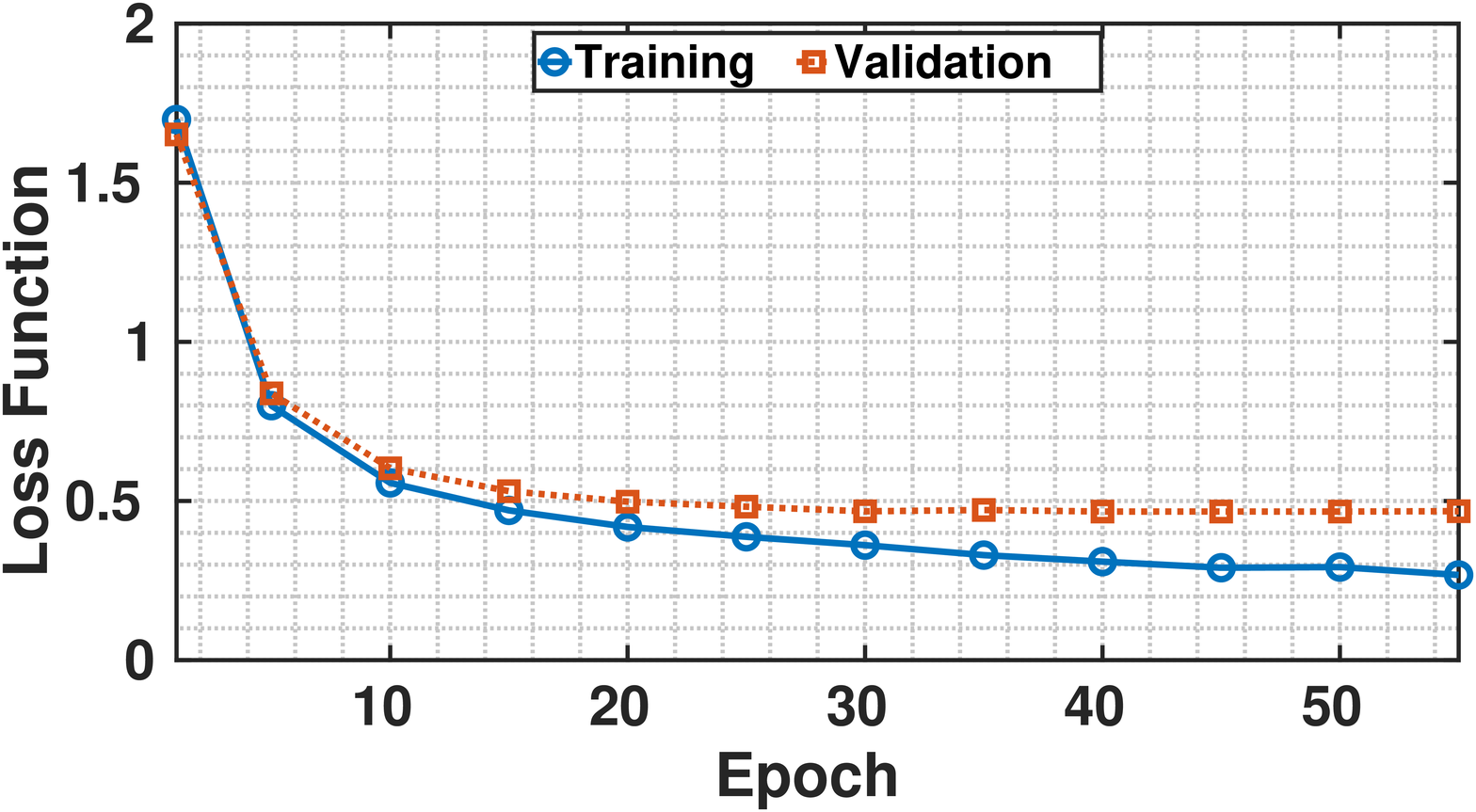}
    }
    \subfigure[Loss function with different learning rate.]{
	    \label{fig:learning_rate}
        \includegraphics[width=\figSize\textwidth]{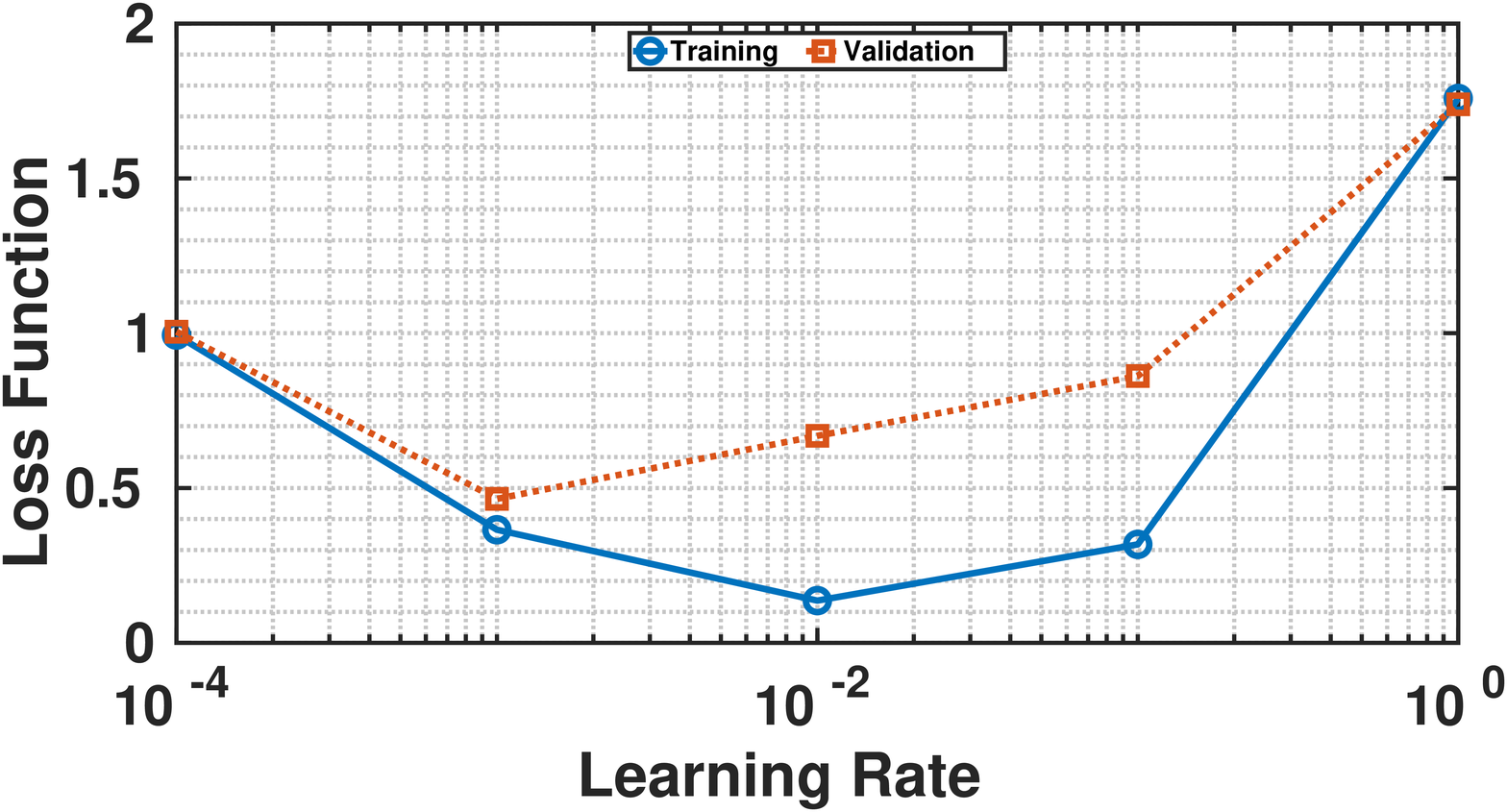}
    }
    \caption{The effect of hyperparameters.}
    \label{fig:hyper}
\end{figure}

\subsubsection{Effect of CONV Depths}
As already eluded to above, each CONV layer encompasses a convolutional layer, a normalization layer, and a ReLU layer. The effect of CONV layers on training and performance are shown in Figures \ref{fig:depth_train}, \ref{fig:depth_valid}, and \ref{fig:depth_time} respectively. More complex structures do not bring obvious advantages over the 1 CONV layer structure but add computational complexity (the more CONV layers, the more training time required in Figure \ref{fig:depth_time}). Additionally, the loss function deteriorates when adding more layers. For instance, the curve of 16 hidden layers in the validation figure \ref{fig:depth_valid} is above the rest. 
In light of the above findings, the hidden layer depth in the sequel is set to 1.

\subsubsection{Effect of the Batch Size}
The batch size is by definition the number of training data in each iteration. In the fixed epochs and training dataset, larger  batch sizes  lead to smaller iterations, which reduces the training time as Figure \ref{fig:batch_time} shows. According to \cite{keskar2016large}, larger batch sizes tend to converge to sharp minimizers but smaller batch sizes would terminate at flat minimizers and the later has better generalization abilities. In Figures \ref{fig:batch_train} and \ref{fig:batch_valid}, we compare the performance for different batch sizes, namely sizes of 16, 32, 64, 128 and 256. As expected, the small batch size has better performance in general. Specifically, around the 30-epoch instance in Figure \ref{fig:batch_valid}, the 32-batch curve outperforms the 64-batch (less than $8.2$\textperthousand) while the others achieve a worse performance. However, note that the training time for 32-batch is almost double that of the 64-batch as shown in Figure \ref{fig:batch_time}. To this end, for the sake of performance and to enable less running time in terms of epochs, the batch size is chosen to be 64.

\subsubsection{Effect of Epochs}
Figure \ref{fig:epoch} illustrates the loss function in training and validation. Initially, the loss drops heavily by increasing the epoch number. Then, the generalization capability decreases slowly from 10 to 30 epochs and reach a plateau after 30 epochs. In this case, large epochs can improve the performance slightly (less than $1.5$\textperthousand) but the price to pay is temporal cost, i.e., more than $86.1\%$ training time is required. Therefore, we select 30 epochs for the training. Note that overfitting could be eased by introducing $L2$ type regularization in \eqref{eq:loss}. 

\subsubsection{Effect of the Learning Rate}
Figure \ref{fig:learning_rate} shows the effect of the learning rate. As expected, a lower learning rate can explore the downward slope with a small step but it would take a longer time to converge. When the learning rate is $10^{-4}$, the training process ends before convergence because the epoch is set to 30; and this can be seen as the reason why a lower learning rate is under-performing. On the other hand, a larger learning rate may oscillate over the minimum point, leading to a potential failure to converge and even diverge, such as the case of the learning rate reaching $1$ in Figure \ref{fig:learning_rate}. Consequently, the learning rate is set to be $10^{-3}$ because of the lowest validation loss.

\subsection{Caching Performance Comparison}
\label{subsec:compa}
We first compare the performance of the different schemes for the case of 5 requests, as illustrated in Table \ref{tab:5req}. The optimal performance of each testing metric is shown in a bold font. In general, all evaluated methods can achieve solutions efficiently, even the branch-and-bound technique (benchmark) provides optimal solutions in around 100 ms. Therefore, we use the optimal solutions in the case of 5 requests to establish training dataset instead of the case of 10, 15 or 20 requests. Benefiting from the prior knowledge learned during the training process and the simple structure (only 1 CONV layer), the pure-CNN configuration can provide the decision making in just 2 ms. However, the pure-CNN approach deals with each flow request independently and does not consider the cross correlation among these requests. Compared with benchmark, while the pure-CNN saves approximately $98.4\%$ computation time, the network suffers from more than $300\%$ total cost payment and $1.4\%$ constraints lack of satisfaction.

The proposed CNN-rMILP utilizes the prediction from pure-CNN to reduce the search space. As can be seen from Table \ref{tab:5req}, the number of decision variables in CNN-rMILP is reduced to $356.98$ on average, while the original number in the  benchmark is $376$. Compared with the benchmark, this reduction saves almost $71.6\%$ computation time, with just $0.8\%$ additional total cost.  Even in the worst case, the performance gap between CNN-rMILP and the optimal solution is $0.32$, which is much smaller than any other algorithm. Therefore, CNN-rMILP provides highly competitive performance for the case of 5 requests. Similarly, CNN-HCLS use the output of the pure-CNN to guide the feasible solution searching, which leads to around $94.6\%$ running time reduction but $149.6\%$ cost increment.

To provide a more holistic view, we tested the above schemes in terms of different nominal metrics such as accuracy, precision, recall and $F_1$ score in Table \ref{tab:5req}.
Interestingly, although the pure-CNN approach performs better in the macro- and micro-averaged metrics, the greedy algorithm GCA and the proposed CNN-HCLS can provide better solutions with less total costs. 
This trend also holds for cases of 10, 15, and 20 requests.
Thus, there is no strong correlation between total cost and typical deep learning metrics, such as accuracy, precision, recall, and $F_1$. 

\begin{table*}[!t]
\centering
\caption{Performance comparison for the case of 5 requests.}
\label{tab:5req}
\begin{tabular}{c|c|c|c|c|c|c}
\hline
\multicolumn{1}{r}{} & &\textbf{Benchmark} &\textbf{pure-CNN} &\textbf{CNN-rMILP} &\textbf{CNN-HCLS} &\textbf{GCA} \\
\hline
\multicolumn{2}{c|}{Mean time} & 101.7 ms & \textbf{1.6 ms} & 28.9 ms & 5.5 ms & 9.6 ms \\
\hline
\multicolumn{2}{c|}{Mean total cost} & \textbf{7.41} & 31.11 & 7.47 & 18.50 & 23.76 \\
\hline
\multicolumn{2}{c|}{Mean feasible ratio} & \textbf{100.00\%} & 98.56\% & \textbf{100.00\%} & 98.83\% & 98.80\% \\
\hline
\multicolumn{2}{c|}{Max total cost difference} & \textbf{0} & 243.51 & 0.32 & 120.90 & 303.20 \\
\hline
\multicolumn{2}{c|}{Number of decision variables} & 376.00 & - & \textbf{356.98} & - & - \\
\hline
\multicolumn{2}{c|}{Macro/Micro accuracy} & \textbf{100.00\%} & 99.52\% & 99.76\% & 93.90\% & 91.35\% \\
\hline
\multicolumn{1}{c|}{\multirow{3}{*}{Macro}}  & precision & \textbf{100.00\%} & 65.87\% & 97.76\% & 58.09\% & 55.28\% \\
\cline{2-7} & recall & \textbf{100.00\%} & 65.87\% & 90.78\% & 60.41\% & 60.73\% \\
\cline{2-7} & $F_1$ & \textbf{100.00\%} & 65.86\% & 93.09\% & 56.74\% & 57.24\% \\
\hline
\multicolumn{1}{c|}{Micro} & precision/recall/$F_1$ & \textbf{100.00\%} & 98.56\% & 99.27\% & 81.71\% & 74.06\% \\
\hline
\end{tabular}
\end{table*}

\begin{table*}[!t]
\centering
\caption{Performance comparison for the case of 10 requests.}
\label{tab:10req}
\begin{tabular}{c|c|c|c|c|c|c}
\hline
\multicolumn{1}{r}{} & &\textbf{Benchmark} &\textbf{pure-CNN} &\textbf{CNN-rMILP} &\textbf{CNN-HCLS} &\textbf{GCA} \\
\hline
\multicolumn{2}{c|}{Mean time} & 2842.8 ms & \textbf{3.6 ms} & 987.7 ms & 19.1 ms & 16.5 ms \\
\hline
\multicolumn{2}{c|}{Mean total cost} & \textbf{22.71} & 269.65 & 23.91 & 120.55 & 166.33 \\
\hline
\multicolumn{2}{c|}{Mean feasible ratio} & \textbf{100.00\%} & 83.86\% & \textbf{100.00\%} & 86.52\% & 86.12\% \\
\hline
\multicolumn{2}{c|}{Max total cost difference} & \textbf{0} & 703.36 & 48.51 & 239.23 & 936.56 \\
\hline
\multicolumn{2}{c|}{Number of decision variables} & 746.00 & - & \textbf{647.74} & - & - \\
\hline
\multicolumn{2}{c|}{Macro/Micro accuracy} & \textbf{100.00\%} & 92.90\% & 95.22\% & 88.64\% & 89.97\% \\
\hline
\multicolumn{1}{c|}{\multirow{3}{*}{Macro}} & precision & \textbf{100.00\%} & 57.01\% & 73.62\% & 55.01\% & 55.59\% \\
\cline{2-7} & recall & \textbf{100.00\%} & 56.70\% & 73.57\% & 55.71\% & 56.28\% \\
\cline{2-7} & $F_1$ & \textbf{100.00\%} & 55.59\% & 73.48\% & 54.04\% & 55.43\% \\
\hline
\multicolumn{1}{c|}{Micro} & precision/recall/$F_1$ & \textbf{100.00\%} & 78.69\% & 85.67\% & 65.91\% & 69.92\% \\
\hline
\end{tabular}
\end{table*}

\begin{table*}[!t]
\centering
\caption{Performance comparison for the case of 15 requests.}
\label{tab:15req}
\begin{tabular}{c|c|c|c|c|c|c}
\hline
\multicolumn{1}{r}{} & &\textbf{Benchmark} &\textbf{pure-CNN} &\textbf{CNN-rMILP} &\textbf{CNN-HCLS} &\textbf{GCA} \\
\hline
\multicolumn{2}{c|}{Mean time} & 35095.0 ms & \textbf{6.0 ms} & 7636.4 ms & 49.3 ms & 23.1 ms \\
\hline
\multicolumn{2}{c|}{Mean total cost} & \textbf{62.83} & 667.05 & 67.24 & 271.45 & 366.11 \\
\hline
\multicolumn{2}{c|}{Mean feasible ratio} & \textbf{100.00\%} & 67.80\% & \textbf{100.00\%} & 74.01\% & 72.96\% \\
\hline
\multicolumn{2}{c|}{Max total cost difference} & \textbf{0} & 1404.95 & 128.50 & 471.16 & 1049.46 \\
\hline
\multicolumn{2}{c|}{Number of decision variables} & 1116.00 & - & \textbf{883.36} & - & - \\
\hline
\multicolumn{2}{c|}{Macro/Micro accuracy} & \textbf{100.00\%} & 86.16\% & 88.60\% & 83.49\% & 87.10\% \\
\hline
\multicolumn{1}{c|}{\multirow{3}{*}{Macro}} & precision & \textbf{100.00\%} & 43.92\% & 59.37\% & 46.26\% & 53.21\% \\
\cline{2-7} & recall & \textbf{100.00\%} & 46.26\% & 59.56\% & 44.33\% & 53.19\% \\
\cline{2-7} & $F_1$ & \textbf{100.00\%} & 43.68\% & 59.32\% & 44.71\% & 53.02\% \\
\hline
\multicolumn{1}{c|}{Micro} & precision/recall/$F_1$ & \textbf{100.00\%} & 58.47\% & 65.79\% & 50.48\% & 61.30\% \\
\hline
\end{tabular}
\end{table*}

\begin{table*}[!t]
\centering
\caption{Performance comparison for the case of 20 requests.}
\label{tab:20req}
\begin{tabular}{c|c|c|c|c|c|c}
\hline
\multicolumn{1}{r}{} & &\textbf{Benchmark} &\textbf{pure-CNN} &\textbf{CNN-rMILP} &\textbf{CNN-HCLS} &\textbf{GCA} \\
\hline
\multicolumn{2}{c|}{Mean time} & 41.04 min & \textbf{10.1 ms} & 4.91 min & 103.7 ms & 29.7 ms \\
\hline
\multicolumn{2}{c|}{Mean total cost} & \textbf{111.17} & 1160.65 & 127.48 & 461.10 & 567.68 \\
\hline
\multicolumn{2}{c|}{Mean feasible ratio} & \textbf{100.00\%} & 55.68\% & \textbf{100.00\%} & 64.51\% & 64.00\% \\
\hline
\multicolumn{2}{c|}{Max total cost difference} & \textbf{0} & 2023.86 & 527.45 & 946.13 & 1268.07 \\
\hline
\multicolumn{2}{c|}{Number of decision variables} & 1486.00 & - & \textbf{1110.88} & - & - \\
\hline
\multicolumn{2}{c|}{Macro/Micro accuracy} & \textbf{100.00\%} & 82.63\% & 83.76\% & 80.74\% & 85.48\% \\
\hline
\multicolumn{1}{c|}{\multirow{3}{*}{Macro}} & precision & \textbf{100.00\%} & 37.68\% & 49.33\% & 40.71\% & 53.48\% \\
\cline{2-7} & recall & \textbf{100.00\%} & 42.18\% & 48.88\% & 39.60\% & 52.92\% \\
\cline{2-7} & $F_1$ & \textbf{100.00\%} & 37.44\% & 48.93\% & 39.92\% & 53.13\% \\
\hline
\multicolumn{1}{c|}{Micro} & precision/recall/$F_1$ & \textbf{100.00\%} & 47.89\% & 51.29\% & 42.23\% & 56.45\% \\
\hline
\end{tabular}
\end{table*}

Tables \ref{tab:10req}, \ref{tab:15req}, and \ref{tab:20req} show different performance metrics for the case of 10, 15, and 20 requests respectively. Generally, the performance of all estimated algorithms deteriorates as the number of requests increases. Although the benchmark provides the minimum cost in all cases, the computation time increases rapidly. As shown in Table \ref{tab:10req}, the computational time for the optimal decision making is less than $3$ seconds in the case of 10 requests, and, then, the time reaches around $41.04$ minutes on average for the case of 20 requests in Table \ref{tab:20req}. Regarding the pure-CNN scheme, on the one hand, it is the fastest method among the five estimated algorithms, whose average decision time is just $10.1$ ms even in the case of 20 requests; on the other hand, it performs worse than GCA, where the total cost of pure-CNN is double the GCA's cost. Moreover, nearly half of the flow requests cannot be served by the allocation from pure-CNN, as the feasible ratio is $55.68\%$ in Table \ref{tab:20req}.

The performance of the proposed CNN-rMILP and CNN-HCLS schemes also become progressively worse due to the increment of user requests. With regard to CNN-rMILP, the mean total cost gap with the benchmark increases from $5.0\%$ (the case of 10 requests) to $14.7\%$ (the case of 20 requests), and the computation time increases from around $1$ second to $5$ minutes. 
For CNN-HCLS, under the case of 20 requests, it can provide decision making in $103.7$ ms with the expense of additional $300\%$ total cost payment.
In summary, compared with the benchmark, both CNN-rMILP and CNN-HCLS make a tradeoff between processing time and overall cost. In particular, CNN-rMILP is suitable for cases in which pseudo-real time processing is allowed so as to provide high-quality decision making, whereas the CNN-HCLS is amenable to real-time decision making with performance gap tolerance.  

\section{Conclusions}
\label{sec:conclusions}

In this paper, we have proposed a framework in which a data-driven technique in the form of deep convolutional neural networks (CNNs) is amalgamated with a model-based approach in the form of a mixed integer programming problem. We have transformed the optimization model into a grayscale image and, then, trained CNNs in parallel with optimal decisions. To further improve the performance, we have provided two algorithms: the proposed CNN-HCLS is suitable for the case in which the processing time is more significant; while CNN-rMILP fits the scenario where the overall cost is more important. 
Numerical investigations reveal that the proposed schemes provide competitive cache allocations. In the case of 15 flow requests, CNN-rMILP manages to accelerate the calculation by 5 times compared to the benchmark, with less than $7\%$ additional overall cost. 
Furthermore, the average computational time for CNN-HCLS is only $49.3$ msec, which make the proposed framework suitable for real-time decision making. 

We envision several future directions for this work. First, we have assumed that the collected network information is convincing. When there are controlled small variations in the input, we need further sensitivity analysis of the deep learning output. Furthermore, considering a scenario with time sequential content request, an interesting extension of the proposed CNN would be to include temporal characteristics, which will require new ways to transform the spatio-temporal characterization of the caching placement problem into an image. Finally, one can study the design of novel distributed machine learning algorithms \cite{9210812,9457160} so as to enable edge devices to proactively optimize caching content and delivery methods without transmitting a large amount of data.

\appendix

\subsection{Proof of Theorem \ref{theo:np_hard}}
\label{sec:proof_A}
	Suppose we have $\mathcal{G_s}\!=\!(\mathcal{V_s} ,\mathcal{L_s})$, where $\mathcal{V_s}\!=\!\{e_1,e_2,a_1\}$ and $\mathcal{L_s}\!=\!\{(e_1,a_1),(e_2,a_1)\}$. For the purpose of expressing easily, we define $l_1\!=\!(e_1,a_1)$ and $l_2\!=\!(e_2,a_1)$. We also define $\mathcal{E_s}\!=\!\{e_1,e_2\}$, $\mathcal{A_s}\!=\!\{a_1\}$, and $\mathcal{L_s}\!=\!\{l_1,l_2\}$.
	Assume the available storage capacities for EC is infinity, i.e., $w_e\to+\infty$. Therefore, constraint \eqref{LP:con2} can be relaxed. Noting that 
	$$\lim_{w_e\to+\infty}\frac{s_k}{w_e}=0,$$
	the constraint \eqref{con:t_e} can be removed safely as well because  $t_e\equiv1$ in that case. Moreover, constraints \eqref{LP:con9}$\sim$\eqref{LP:con11} can be relaxed considering $\chi_{ke}=t_e\!\cdot\!x_{ke}=x_{ke}$. In the proposed network topology $\mathcal{G_s}$, there is only one AR $a_1$ in the network, so the second dimension in decision variable $z_{kae}$ could be squeezed as $z_{ke}$. As a result,  the decision variables $x_{ke}$, $y_{kl}$, and $z_{kae}$ have the exact same meaning in this scenario, since each path consists only one link (i.e. the path from $a_1$ to $e_1$ is link $l_1$, and the path from $a_1$ to $e_2$ is $l_2$) and each link represents a unique path. In other words, we can use decision variable $y_{kl}$ to replace the other two. Furthermore, let the moving probability $p_{ka}=1,\forall k\in\mathcal{K}$. Then the problem \eqref{LP:main_MILP} can be expressed as: given a positive number $b$, is it possible to find a feasible solution $y_{kl}$ such that
	
	\begin{equation}
	\label{fml:obj2}
	J\leq b,
	\end{equation}
	with constraints
	\begin{subequations}
		\begin{align}
		\label{fml:newcon1}
		&\sum_{l\in\mathcal{L_s}}y_{kl}\!\leq\!1,\forall k\!\in\!\mathcal{K}, \\
		\label{fml:newcon2}
		&\sum_{k\in\mathcal{K}}b_k\!\cdot\!y_{kl}\!<\!c_l,\forall l\!\in\!\mathcal{L_s}, \\
		\label{fml:newcon3}
		&y_{kl}\!\in\!\{0,1\},\forall k\!\in\!\mathcal{K},l\!\in\!\mathcal{L_s},
		\end{align}
	\end{subequations}
	satisfied. Same as \cite{fan2018application}, assume $b\to+\infty$, then \eqref{fml:obj2} can be relaxed. Moreover, we set each link has same capacity, i.e. $c_{l_1}=c_{l_2}=\frac{1}{2}\sum_{k\in\mathcal{K}}b_k+\nu$, where $\nu\ll\frac{1}{2}min(b_k)$. In order to satisfy \eqref{fml:newcon2}, we guarantee that $$\sum_{k\in\mathcal{K}}b_k y_{kl_1}=\sum_{k\in\mathcal{K}}b_k y_{kl_2}=\frac{1}{2}\sum_{k\in\mathcal{K}}b_k,$$
	i.e. the set-partition problem.	In other word, it is reducible to MILP model \eqref{LP:main_MILP} while set-partition is a well-known $NP$-hard problem \cite{cormen2013algorithms}.

\subsection{Proof of Theorem \ref{theo:micro}}
\label{sec:proof_B}
	In the operation of micro, all the quantities, i.e. $T^+$, $T^-$, $F^+$ and $F^-$, are in the form of summary alongside EC. Regarding $F^+$, there is
	\begin{equation*}
	\begin{aligned}
	F^+&=\sum_{e=1}^{|\mathcal{E}|}F^+_e=\sum_{e=1}^{|\mathcal{E}|}\sum_{i=1}^{|\mathcal{T}|}|\pi_{ie}\notin X_i\cap\pi_{ie}\in\hat{X}_i|=\sum_{i=1}^{|\mathcal{T}|}\left(\sum_{e=1}^{|\mathcal{E}|}|\pi_{ie}\notin X_i\cap\pi_{ie}\in\hat{X}_i|\right).
	\end{aligned}
	\end{equation*}
	The cardinality $|\pi_{ie}\notin X_i\cap\pi_{ie}\in\hat{X}_i|$ indicates the number of cells whose value is $1$ in prediction but $0$ in fact in each testing instance. Constraint \eqref{LP:con1} forces that each request is served by exact one EC. In other words, once the cell is $1$ in prediction but $0$ in fact, there would be a corresponding cell in the same column is $0$ but $1$ in fact. Therefore,
	$$
	\sum_{e=1}^{|\mathcal{E}|}|\pi_{ie}\notin X_i\cap\pi_{ie}\in\hat{X}_i|=\sum_{e=1}^{|\mathcal{E}|}|\pi_{ie}\in X_i\cap\pi_{ie}\notin\hat{X}_i|, \forall i\in\mathcal{T}.
	$$
	$F^+$ can be rewritten as
	\begin{align*}
	   F^+=&
	   \sum_{i=1}^{|\mathcal{T}|}\left(\sum_{e=1}^{|\mathcal{E}|}|\pi_{ie}\notin X_i\cap\pi_{ie}\in\hat{X}_i|\right)=
	   \sum_{i=1}^{|\mathcal{T}|}\left(\sum_{e=1}^{|\mathcal{E}|}|\pi_{ie}\in X_i\cap\pi_{ie}\notin\hat{X}_i|\right)\\ 
	   =&\sum_{e=1}^{|\mathcal{E}|}\sum_{i=1}^{|\mathcal{T}|}|\pi_{ie}\in X_i\cap\pi_{ie}\notin\hat{X}_i|=\sum_{e=1}^{|\mathcal{E}|}F^-_e=F^-.
	\end{align*}
	Consequently,
	\begin{equation*}
	\begin{aligned}
	    P_{\textrm{micro}}
	    =\frac{\sum_{e=1}^{|\mathcal{E}|}T^+_e}{\sum_{e=1}^{|\mathcal{E}|}T^+_e+\sum_{e=1}^{|\mathcal{E}|}F^+_e}
	    =\frac{T^+}{T^++F^+}=\frac{T^+}{T^++F^-}=R_{\textrm{micro}},
	\end{aligned}
	\end{equation*}
	and 
	\begin{equation*}
	\begin{aligned}
		F_{1\textrm{micro}}\!=\!\frac{2\!\times\!T^+}{2\!\times\!T^+\!+\!F^-\!+\!F^+}
 		=\frac{2\times T^+}{2\times T^++F^++F^+}
		\!=\!\frac{2\!\times\! T^+}{2\!\times\!(T^+\!+\!F^+)}=P_{\textrm{micro}}.
	\end{aligned}
	\end{equation*}

%=====================================
\bibliographystyle{ieeetr}
\def\baselinestretch{1.4}
\bibliography{reference}

\begin{thebibliography}{10}

\bibitem{SurveyCaching}
S.~{Wang}, X.~{Zhang}, Y.~{Zhang}, L.~{Wang}, J.~{Yang}, and W.~{Wang}, ``A
  survey on mobile edge networks: Convergence of computing, caching and
  communications,'' {\em IEEE Access}, vol.~5, pp.~6757--6779, March 2017.

\bibitem{vasilakos2012proactive}
X.~Vasilakos, V.~A. Siris, G.~C. Polyzos, and M.~Pomonis, ``Proactive selective
  neighbor caching for enhancing mobility support in information-centric
  networks,'' in {\em Proc. of the Second Edition of the ICN Workshop on
  Information-Centric Networking}, pp.~61--66, Helsinki, Finland, August 2012.

\bibitem{sourlas2010mobility}
V.~Sourlas, G.~S. Paschos, P.~Flegkas, and L.~Tassiulas, ``Mobility support
  through caching in content-based publish/subscribe networks,'' in {\em Proc.
  of the 10th IEEE/ACM International Conference on Cluster, Cloud and Grid
  Computing}, pp.~715--720, Melbourne, VIC, Australia, June 2010.

\bibitem{farooq2004performance}
U.~Farooq, E.~W. Parsons, and S.~Majumdar, ``Performance of publish/subscribe
  middleware in mobile wireless networks,'' in {\em Proc. of ACM SIGSOFT
  Software Engineering Notes}, vol.~29, pp.~278--289, New York, NY, USA,
  January 2004.

\bibitem{gaddah2010extending}
A.~Gaddah and T.~Kunz, ``Extending mobility to publish/subscribe systems using
  a pro-active caching approach,'' {\em Mobile Information Systems}, vol.~6,
  no.~4, pp.~293--324, December 2010.

\bibitem{wang2020survey}
Y.~Wang and V.~Friderikos, ``A survey of deep learning for data caching in edge
  network,'' {\em Informatics}, vol.~7, no.~4, pp.~43--71, October 2020.

\bibitem{yang2018cache}
Z.~Yang, C.~Pan, Y.~Pan, Y.~Wu, W.~Xu, M.~Shikh-Bahaei, and M.~Chen, ``Cache
  placement in two-tier hetnets with limited storage capacity: Cache or
  buffer?,'' {\em IEEE Transactions on Communications}, vol.~66, no.~11,
  pp.~5415--5429, June 2018.

\bibitem{yang2020joint}
J.~Yang, C.~Ma, B.~Jiang, G.~Ding, G.~Zheng, and H.~Wang, ``Joint optimization
  in cached-enabled heterogeneous network for efficient industrial iot,'' {\em
  IEEE Journal on Selected Areas in Communications}, vol.~38, no.~5,
  pp.~831--844, March 2020.

\bibitem{wang2019proactive}
Y.~Wang, G.~Zheng, and V.~Friderikos, ``Proactive caching in mobile networks
  with delay guarantees,'' in {\em Proc. of IEEE International Conference on
  Communications (ICC)}, pp.~1--6, Shanghai, China, May 2019.

\bibitem{zheng2016optimal}
G.~Zheng and V.~Friderikos, ``Optimal proactive cache management in mobile
  networks,'' in {\em Proc. of IEEE International Conference on Communications
  (ICC)}, pp.~1--6, Kuala Lumpur, Malaysia, May 2016.

\bibitem{fang2015energy}
C.~Fang, F.~R. Yu, T.~Huang, J.~Liu, and Y.~Liu, ``An energy-efficient
  distributed in-network caching scheme for green content-centric networks,''
  {\em Computer Networks}, vol.~78, pp.~119--129, February 2015.

\bibitem{zou2019joint}
J.~Zou, C.~Li, C.~Zhai, H.~Xiong, and E.~Steinbach, ``Joint pricing and cache
  placement for video caching: a game theoretic approach,'' {\em IEEE Journal
  on Selected Areas in Communications}, vol.~37, no.~7, pp.~1566--1583, May
  2019.

\bibitem{lei2019learning}
L.~Lei, Y.~Yuan, T.~X. Vu, S.~Chatzinotas, and B.~Ottersten, ``Learning-based
  resource allocation: Efficient content delivery enabled by convolutional
  neural network,'' in {\em Proc. of IEEE 20th International Workshop on Signal
  Processing Advances in Wireless Communications (SPAWC)}, pp.~1--5, Cannes,
  France, July 2019.

\bibitem{lei2017deep}
L.~Lei, L.~You, G.~Dai, T.~X. Vu, D.~Yuan, and S.~Chatzinotas, ``A deep
  learning approach for optimizing content delivering in cache-enabled
  hetnet,'' in {\em 2017 international symposium on wireless communication
  systems (ISWCS)}, pp.~449--453, IEEE, 2017.

\bibitem{chen2017caching}
M.~Chen, M.~Mozaffari, W.~Saad, C.~Yin, M.~Debbah, and C.~S. Hong, ``Caching in
  the sky: Proactive deployment of cache-enabled unmanned aerial vehicles for
  optimized quality-of-experience,'' {\em IEEE Journal on Selected Areas in
  Communications}, vol.~35, no.~5, pp.~1046--1061, March 2017.

\bibitem{tsai2018caching}
K.~C. Tsai, L.~L. Wang, and Z.~Han, ``Caching for mobile social networks with
  deep learning: Twitter analysis for 2016 u.s. election,'' {\em IEEE
  Transactions on Network Science and Engineering}, vol.~7, no.~1,
  pp.~193--204, May 2018.

\bibitem{ale2019online}
L.~Ale, N.~Zhang, H.~Wu, D.~Chen, and T.~Han, ``Online proactive caching in
  mobile edge computing using bidirectional deep recurrent neural network,''
  {\em IEEE Internet of Things Journal}, vol.~6, no.~3, pp.~5520--5530, March
  2019.

\bibitem{fan2021pa}
Q.~Fan, X.~Li, J.~Li, Q.~He, K.~Wang, and J.~Wen, ``Pa-cache: Evolving
  learning-based popularity-aware content caching in edge networks,'' {\em IEEE
  Transactions on Network and Service Management}, vol.~18, no.~2,
  pp.~1746--1757, January 2021.

\bibitem{qian2020reinforcement}
Y.~Qian, R.~Wang, J.~Wu, B.~Tan, and H.~Ren, ``Reinforcement learning-based
  optimal computing and caching in mobile edge network,'' {\em IEEE Journal on
  Selected Areas in Communications}, vol.~38, no.~10, pp.~2343--2355, June
  2020.

\bibitem{he2017integrated}
Y.~He, N.~Zhao, and H.~Yin, ``Integrated networking, caching, and computing for
  connected vehicles: A deep reinforcement learning approach,'' {\em IEEE
  Transactions on Vehicular Technology}, vol.~67, no.~1, pp.~44--55, October
  2017.

\bibitem{li2019deep}
L.~Li, Y.~Xu, J.~Yin, W.~Liang, X.~Li, W.~Chen, and Z.~Han, ``Deep
  reinforcement learning approaches for content caching in cache-enabled d2d
  networks,'' {\em IEEE Internet of Things Journal}, vol.~7, no.~1,
  pp.~544--557, November 2019.

\bibitem{zhong2020deep}
C.~Zhong, M.~C. Gursoy, and S.~Velipasalar, ``Deep reinforcement learning-based
  edge caching in wireless networks,'' {\em IEEE Transactions on Cognitive
  Communications and Networking}, vol.~6, no.~1, pp.~48--61, January 2020.

\bibitem{wang2020caching}
Y.~Wang and V.~Friderikos, ``Caching as an image characterization problem using
  deep convolutional neural networks,'' in {\em Proc. of the IEEE International
  Conference on Communications (ICC)}, pp.~1--6, Virtual Conference, June 2020.

\bibitem{sze2017efficient}
V.~Sze, Y.-H. Chen, T.-J. Yang, and J.~S. Emer, ``Efficient processing of deep
  neural networks: A tutorial and survey,'' {\em Proceedings of the IEEE},
  vol.~105, no.~12, pp.~2295--2329, November 2017.

\bibitem{nowak2018revised}
A.~Nowak, S.~Villar, A.~S. Bandeira, and J.~Bruna, ``Revised note on learning
  quadratic assignment with graph neural networks,'' in {\em Proc. of IEEE Data
  Science Workshop (DSW)}, pp.~1--5, Lausanne, Switzerland, June 2018.

\bibitem{xylomenos2013survey}
G.~Xylomenos, C.~N. Ververidis, V.~A. Siris, N.~Fotiou, C.~Tsilopoulos,
  X.~Vasilakos, K.~V. Katsaros, and G.~C. Polyzos, ``A survey of
  information-centric networking research,'' {\em IEEE communications surveys
  \& tutorials}, vol.~16, no.~2, pp.~1024--1049, July 2013.

\bibitem{van2014performance}
P.~Van~Mieghem, {\em Performance analysis of communications networks and
  systems}.
\newblock Cambridge University Press, 2014.

\bibitem{zhang2013review}
M.-L. Zhang and Z.-H. Zhou, ``A review on multi-label learning algorithms,''
  {\em IEEE Transactions on Knowledge and Data Engineering}, vol.~26, no.~8,
  pp.~1819--1837, March 2013.

\bibitem{al2018move}
A.~Al-Molegi, M.~Jabreel, and A.~Mart{\'\i}nez-Ballest{\'e}, ``Move, attend and
  predict: An attention-based neural model for people’s movement
  prediction,'' {\em Pattern Recognition Letters}, vol.~112, no.~5, pp.~34--40,
  September 2018.

\bibitem{etsi2020mec}
{MEC ETSI Industry Specification Group}, ``Multi-access edge computing
  (mec);framework and reference architecture,'' White Paper GS-MEC-003-V2.2.1,
  ETSI, December 2020. [Online].
\newblock
  {\url{https://www.etsi.org/deliver/etsi_gs/MEC/001_099/003/02.02.01_60/gs_MEC003v020201p.pdf}}.

\bibitem{keskar2016large}
N.~S. Keskar, D.~Mudigere, J.~Nocedal, M.~Smelyanskiy, and P.~T.~P. Tang, ``On
  large-batch training for deep learning: Generalization gap and sharp
  minima,'' {\em arXiv preprint arXiv:1609.04836}, 2016.

\bibitem{9210812}
M.~Chen, Z.~Yang, W.~Saad, C.~Yin, H.~V. Poor, and S.~Cui, ``A joint learning
  and communications framework for federated learning over wireless networks,''
  {\em IEEE Transactions on Wireless Communications}, vol.~20, pp.~269--283,
  Jan. 2021.

\bibitem{9457160}
Y.~Hu, M.~Chen, W.~Saad, H.~V. Poor, and S.~Cui, ``Distributed multi-agent meta
  learning for trajectory design in wireless drone networks,'' {\em IEEE
  Journal on Selected Areas in Communications}, to appear, 2021.

\bibitem{fan2018application}
Q.~Fan and N.~Ansari, ``Application aware workload allocation for edge
  computing-based iot,'' {\em IEEE Internet of Things Journal}, vol.~5, no.~3,
  pp.~2146--2153, April 2018.

\bibitem{cormen2013algorithms}
T.~H. Cormen, {\em Algorithms unlocked}.
\newblock Mit Press, 2013.

\end{thebibliography}
\end{document}